\documentclass[journal]{IEEEtran}

\usepackage{comment}
\usepackage[section]{placeins}
\usepackage{url}  
\usepackage{graphicx}  
\usepackage{color}
\usepackage{comment}
\usepackage{hyperref}
\usepackage{array}
\usepackage{graphicx,subfigure}
\usepackage[noend]{algpseudocode}
\usepackage{algorithm}

\usepackage{amsmath, amsthm, amssymb}

\usepackage{mdwlist}
\usepackage[short]{optional}

\usepackage[utf8]{inputenc}
\usepackage{CJK}
\usepackage[left=1.5cm, right=1.5cm, top=1.5cm, bottom=1.5cm]{geometry}

\usepackage{enumitem}
\usepackage{threeparttable}
\usepackage{booktabs}
\usepackage{diagbox}
\usepackage{stfloats}
\newtheorem{theorem}{Theorem}

\newtheorem{definition}{Definition}





\algnewcommand\And{\textbf{and}}

\newcommand{\shenr}[1] 
{
\textbf{\color{blue}{Shen: #1}}
}
\newcommand{\shenc}[1] 
{
\textbf{\color{red}{Shen: #1}}
}
\newcommand{\yehc}[1] 
{
\textbf{\color{red}{Shen: #1}}
}

\newcommand{\tsengc}[1] 
{
\textbf{\color{magenta}{Tseng: #1}}
}

\newcommand{\phead}[1]
{
\noindent \textbf{#1}
}

\begin{document}
\begin{CJK}{UTF8}{bkai}

\title{Auditable Homomorphic-based Decentralized Collaborative AI with Attribute-based Differential Privacy
}

\author{Lo-Yao Yeh, Sheng-Po Tseng, Chia-Hsun Lu, and Chih-Ya Shen

\IEEEcompsocitemizethanks{
\IEEEcompsocthanksitem L.-Y. Yeh is with the Department of Information Management, National Central University, Taoyuan 320, Taiwan (e-mail: yehloyao@ncu.edu.tw).%
\IEEEcompsocthanksitem S.-P. Tseng, C.-H. Lu, C.-Y. Shen are with the Department of Computer Science, National Tsing Hua University, Hsinchu, Taiwan (e-mail: chihya@cs.nthu.edu.tw).}}

\maketitle

\begin{abstract}
\normalsize
In recent years, the notion of federated learning (FL) has led to the new paradigm of distributed artificial intelligence (AI) with privacy preservation. However, most current FL systems suffer from data privacy issues due to the requirement of a trusted third party. Although some previous works introduce differential privacy to protect the data, however, it may also significantly deteriorate the model performance. To address these issues, we propose a novel decentralized collaborative AI framework, named \emph{\underline{A}uditable Homomorphic-based D\underline{e}cent\underline{r}al\underline{is}ed Collaborative AI (AerisAI)}, to improve security with homomorphic encryption and fine-grained differential privacy. Our proposed AerisAI directly aggregates the encrypted parameters with a blockchain-based smart contract to get rid of the need of a trusted third party. We also propose a brand-new concept for eliminating the negative impacts of differential privacy for model performance. Moreover, the proposed AerisAI also provides the broadcast-aware group key management based on ciphertext-policy attribute-based encryption (CP-ABE) to achieve fine-grained access control based on different service-level agreements. We provide a formal theoretical analysis of the proposed AerisAI as well as the functionality comparison with the other baselines. We also conduct extensive experiments on real datasets to evaluate the proposed approach. The experimental results indicate that our proposed AerisAI significantly outperforms the other state-of-the-art baselines. 
\end{abstract}

\begin{IEEEkeywords}
Federated learning, Blockchain, Privacy preservation, Group key management
\end{IEEEkeywords}

\section{Introduction} 
\label{c:Introduction}

With the emergence of deep learning, an increasing number of companies and organizations train deep learning models with their private data for a wide spectrum of applications, such as medical diagnosis, commercial analysis, recommendation for e-commerce, and surveillance systems. To train an accurate model, it is necessary to collect a large amount of high-quality data. However, not every organization has a sufficient amount of data for training an accurate model in the real world. To solve this problem, sharing data with each other would be a straightforward approach. However, an issue arises -- data privacy. Due to privacy concerns, organizations may hesitate to exchange/share data with others. This is also the main obstacle in applying large-scale deep learning models in real-world scenarios.

To tackle the privacy issue, the notion of \emph{Federated Learning (FL)}\footnote{In this paper, we focus on the most widely adopted \emph{Horizontal Federated Learning (HFL)}. For the ease of presentation, we use the terms \emph{Horizontal Federated Learning (HFL)} and \emph{Federated Learning (FL)} interchangeably in this paper.}~\cite{mcmahan2017communication} came out for large-scale machine learning. Instead of sharing/exchanging the private data among the participating organizations directly, in Horizontal Federated Learning (HFL), these organizations collaboratively train a global deep learning model without disclosing their private data. Conceptually, each client of the HFL is connected to a centralized server. Each client first trains the local model with its private data and then sends the gradients to the centralized server after local training. The server is responsible for updating the global model by aggregating the gradients from all the clients. Afterward, each client replaces the local model weights with the updated global model downloaded from the server.

Federated learning has been employed in many application scenarios. For example, FL is applied on the Internet of Things (IoT)~\cite{zhang2020efficient, baghban2022edge, leng2023manuchain, wang2023eidls} to enable collaboration among a massive number of edge devices. FL is also applied to healthcare~\cite{chen2020fedhealth, yan2020variation}. The global model is trained with the data of user activities to achieve accurate personal healthcare with FL. Moreover, FL is introduced to address the privacy issues in the intelligent transportation systems \cite{zhao2022decentralized}.


However, some weaknesses remain in the widely-adopted FL frameworks \cite{mcmahan2017communication, aono2017privacy,Nagalapatti_Mittal_Narayanam_2022}, as listed below.  

\phead{W1.} \emph{Requiring a centralized server}. The centralized server plays a critical role in FL to aggregate the gradients and store the global model. However, there is a potential risk associated with this single point of failure. If the server fails or is compromised, the entire system collapses.

\phead{W2.} \emph{Lack of auditability}. Conventional FL works in a less transparent fashion, and only the centralized server keeps track of the uploaded statistics from the clients. This jeopardizes the trust of the clients, because they do not have any information to judge whether any \emph{lazy} client exists (i.e., a client that does not contribute their private data to train its local model and upload the gradients), unless the centralized server explicitly discloses such information.

\phead{W3.} \emph{Privacy issues -- gradient leakage and non-differential privacy}. Although FL enables privacy-preserving distributed machine learning by sending only the gradients (instead of the private data) from the clients to the server, however, the uploaded gradients may still leak the clients' private information~\cite{zhu2019deep, geiping2020inverting,huang2009provable}. In addition, conventional FL does not adopt the differential privacy technique. As a consequence, the gradients may be inferred by comparing the global models before and after the model updates. For instance, assume that the original global model parameters are $[0.05, -0.04]$. After an update, the model parameters become $[0.02, -0.01]$. In this case, one can infer that the gradients are $[-0.03, 0.03]$ by subtracting the original parameters from the updated ones.



To address the weakness W1 above (i.e., requiring a centralized server), some previous works integrated blockchain technology with FL~\cite{chen2021ds2pm, guo2022sandbox, xu2022spdl, kalapaaking2022blockchain}. Decentralization is one of the characteristics of blockchain to avoid the need for a centralized server. 
However, there is a major security risk in previous works~\cite{chen2021ds2pm, kalapaaking2022blockchain}, i.e., when aggregating and updating the gradients to the global model, the gradients may leak, even if encryption schemes are applied. This is because previous works~\cite{chen2021ds2pm, kalapaaking2022blockchain} need to decrypt the gradients before the aggregation. Unfortunately, Guo et al.~\cite{guo2022sandbox} do not protect the model information in the whole procedure. In addition, several works \cite{xu2022spdl, wang2021protecting, truex2019hybrid, li2020privacy} perturb the gradients to mask the real values based on differential privacy \cite{dwork2006calibrating}, but the model performance significantly deteriorates when the privacy budget is decreased (i.e., the data privacy protection is enhanced).
In summary, previous approaches that integrate blockchain with FL either do not fully address the privacy issue or result in serious performance degradation.

\sloppy
To tackle these critical issues, in this paper, we propose \emph{\underline{A}uditable Homomorphic-based D\underline{e}cent\underline{r}al\underline{is}ed AI (AerisAI)}, which achieves excellent model performance while preserving the privacy of the clients. The proposed \emph{AerisAI} addresses all the above weaknesses as below.

\begin{itemize}
    \item \emph{Decentralization.} To address the weakness W1, we employ blockchain to build a decentralized platform. In this case, the weaknesses of requiring a centralized server, e.g., a single point of failure and performance bottlenecks, no longer exists.
    
    \item \emph{Auditability.} All the transactions on the blockchain are recorded and can be audited by all the clients. This addresses the weakness W2. The full transparency boosts the clients' confidence in the system.
    
    \item \emph{Privacy preservation.} To address the weakness W3, we perturb the gradients by adding noise to prevent the real values of the gradients from being exposed on blockchain. We also use homomorphic encryption to encrypt the perturbed gradients and noise. Besides, we devise the aggregation algorithm in the smart contract that automatically aggregates the encrypted perturbed gradients and noise to update the global model. Notice that we simultaneously achieve both gradients perturbation and masking, which has not been provided in current works. 
    
    \item \emph{Group key management.} To address the degradation of global model performance caused by noise injection in federated learning, we propose a method wherein clients can encrypt the noise and upload it through a smart contract for aggregation. The blockchain oracle is responsible for distributing the aggregated noise, which enables clients to mitigate the negative impact of the noise. In traditional approaches, the oracle encrypts the aggregated noise using a specific client's public key and then distributes it to the corresponding client, who can decrypt the encrypted aggregated noise with its private key. However, this method becomes time-consuming when dealing with multiple clients participating in model training. With an increasing number of clients, the oracle takes more time to distribute the aggregated noise to each client, which poses scalability problems. To address the issue of scalability, we propose the use of group key management based on ciphertext policy attribute-based encryption (CP-ABE) \cite{bethencourt2007ciphertext}, allowing the oracle to broadcast the aggregated noise to clients. AerisAI is the first approach to integrate this proposed group key management based on the CP-ABE cryptosystem into federated learning, enabling the blockchain oracle to distribute the encrypted aggregated noise more efficiently. 
\end{itemize}

In this paper, we design the protocol and workflow of AerisAI, and we formally prove its security. We provide a functionality comparison to demonstrate that the proposed AerisAI addresses all the privacy and security issues while the other state-of-the-art baselines do not. We also conduct extensive experiments on real machine learning datasets to demonstrate the proposed AerisAI outperforms the other baselines significantly. The contributions of this paper are summarized as follows.

\begin{itemize}
    \item We introduce blockchain to enable a decentralized and transparent platform. The clients do not need a centralized server to store and update the global model. Instead, the smart contract automatically completes the operations the centralized server offers.
    
    \item To our best knowledge, the proposed AriesAI is the first decentralized collaborative AI framework that combines gradients perturbation, gradients masking, and group key management. The gradients perturbation and masking effectively improve the security, while group key management significantly improves the scalability of the blockchain-enabled platform.
    
    \item We formally prove the security of the proposed AerisAI and compare it with other state-of-the-art approaches in terms of functionality. 
    
    \item We perform extensive experiments on real benchmark dataset and compare with multiple state-of-the-art baselines to demonstrate the accuracy of the global model. The results indicate that the performance of the global model is not degraded by our proposed techniques that significantly improve the security strength.
\end{itemize}

The remainder of this paper is organized as follows. Sec. \ref{c:Relatedwork} introduces the related work. Sec. \ref{c:System} discusses the system model and assumptions. The proposed framework is detailed in Sec. \ref{c:Scheme}. In Sec. \ref{c:Securityanalysis}, we present the formal security analysis. Sec. \ref{c:Evaluatoin} presents the functionality comparisons and the experimental results. Finally, Sec. \ref{c:Conclusions} concludes this paper.


\section{Related Works}
\label{c:Relatedwork}
\phead{Federated Learning.}
The concept of Federated Learning (FL) is to allow multiple clients collaboratively train a machine learning model with a centralized server, while keeping the training data of each client decentralized and private~\cite{yang2019federated}. The private data from each client are not shared with the other clients or centralized server, which mitigates the privacy risks of traditional machine learning that needs to upload and combine the private data from each client before training. 

McMahan et al. \cite{mcmahan2017communication} first propose the FL framework for mobile devices. Each client initializes its local model by first downloading the global model and trains the local model using its private dataset. Afterward, the server randomly samples several clients and computes the weighted average of the model parameters, which are then employed as the global model in the next round. Phong et al. \cite{aono2017privacy} introduce additively homomorphic encryption for the FL scheme. The clients encrypt gradients before uploading them to the server, which avoids information leakage to the server. Due to the characteristics of additively homomorphic encryption, the server is thus able to correctly aggregate the ciphertext and update the encrypted global model.
Naglapatti et al. \cite{Nagalapatti_Mittal_Narayanam_2022} build an effective global model by training a policy network based on reinforcement learning (RL)~\cite{sutton1999policy}. The RL agent helps the clients select the relevant data samples that are benign to the global model from the private dataset for local training. 

Although the above FL frameworks achieve good performance while keeping the private data decentralized~\cite{mcmahan2017communication,aono2017privacy,Nagalapatti_Mittal_Narayanam_2022,wang2023eidls,zhao2022decentralized}, they still rely on a centralized server to aggregate gradients from the clients. This centralized server may lead to potential security risks, i.e., a single point of failure. In our proposed AerisAI, we employ blockchain~\cite{nakamoto2008bitcoin} to replace the centralized server. In this way, the distributed nature of blockchain solves the weakness of the centralized server and alleviates the burden of server maintenance.

\phead{Blockchain-Based Collaborative AI Platform.}
In 2008, Nakamoto first introduces the concept of blockchain, which is a public ledger that records all the transactions on the list of blocks~\cite{nakamoto2008bitcoin}. Blockchain-related techniques have also been applied to FL. Chen et al. \cite{chen2021ds2pm} construct a platform of FL that combines blockchain, in which a data requester invokes a transaction, and the data providers whose private data are similar to the requested data collaboratively train a global model. After the providers train the global model, they encrypt the model with homomorphic encryption and send it to the requester. 

The concepts of the sandbox and state channel are introduced to mobile edge computing systems that combine blockchain and FL~\cite{guo2022sandbox}. The state channel is used to create a trusted sandbox that aggregates the local model into the global model and offers suitable rewards to each device. Xu et al. \cite{xu2022spdl} introduce an aggregation rule, called Krum~\cite{blanchard2017machine}, to select the benign gradients for updating the global model. This rule filters Byzantine gradients at each round to maintain the performance of the global model. Besides, each client adds noise to the gradients before uploading them to the blockchain, to avoid leaking the private data from gradients~\cite{zhu2019deep,geiping2020inverting}.

Kalapaaking et al. \cite{kalapaaking2022blockchain} propose a blockchain-based AI framework with Intel Software Guard Extension (SGX)~\cite{costan2016intel} in industrial Internet-of-Things. The SGX provides an isolated environment in which the edge server aggregates the local models. Since previous research indicate that there exists a chance that private data are recovered from the gradients or model parameters~\cite{zhu2019deep,geiping2020inverting}, the values of the parameters are not directly exposed during the process. However, the model parameters are not masked before uploading to the edge server and blockchain~\cite{guo2022sandbox,kalapaaking2022blockchain}, which may lead to significant parameters/data leakage. 
 
In the above approaches~\cite{chen2021ds2pm,kalapaaking2022blockchain}, the encrypted local models need to be decrypted before aggregation, leading to a risk of private data leakage. To tackle this important privacy issue, in this paper, we encrypt the global model and gradients by homomorphic encryption, which are not decrypted during the aggregation. Another potential approach to prevent private data leakage is to employ the idea of differential privacy~\cite{xu2022spdl} that perturbs the gradients with noise. However, adding noise degrades the performance of the global model. In this paper, we carefully design the mechanism, named \emph{Attribute-Based Differential Privacy}, to avoid the performance drop of the global model while preserving data privacy.

\phead{Application scenarios for the proposed approach.} The proposed approach has a wide spectrum for applications, such as the communication and group organization in social networks~\cite{shen2015socio,shen2017finding,shen2020activity,shen2020extracting}, e-commerce~\cite{chang2019blockchain,chen2018efficient,lin2019smart}, collaborative learning and sensing~\cite{yang2021learning,chang2019she,yeh2022gdpr}, machine learning applications that requires enhanced privacy preservation~\cite{chang2022learning,yang2022enhancing}.

\section{System Model and Assumptions}
\label{c:System}
\phead{System Model.}
The system proposed in this paper is composed of the following major components. 

\emph{Clients.} Each client has a machine learning model (local model) and its private data that are not shared with others or the server. Each client employs its training data to train its local model and obtains the gradients from the training process. The generated gradients that are used to update the global model are encrypted and uploaded to the blockchain. Following previous assumptions adopted by most FL approaches, the clients are assumed to be honest-but-curious~\cite{nagalapatti2021game, Nagalapatti_Mittal_Narayanam_2022, mohassel2017secureml}, meaning they will follow the protocol in the system but will attempt to learn the other clients' private information as much as possible. In other words, each client truthfully performs the designed operations to produce the correct outcome, but each client would also attempt to learn other clients' information from the uploaded gradients.

\emph{Blockchain.} As a form of public ledger, blockchain records each transaction on the chain. In this paper, we leverage Hyperledger Fabric to develop our blockchain network for recording the uploaded gradients from the clients. Note that the encrypted global model is also stored on the blockchain. This blockchain is a consortium blockchain that relies on a Byzantine fault-tolerant (BFT) protocol to reach a consensus. Typically, it creates a secure environment by assuming no more than $\frac{1}{3}$ of malicious clients in the system~\cite{androulaki2018hyperledger}. 

\emph{Smart contract.} The smart contract acts as a bridge between the clients and the blockchain. 
The clients upload gradients and download the global model from the blockchain via the smart contract. Therefore, the aggregation method is defined in the smart contract, and the smart contract automatically aggregates the uploaded gradients according to the definition.
    
\emph{Oracle.} In our scenario, the blockchain oracle~\cite{xu2016blockchain, moudoud2019iot} is introduced to decrypt the encrypted noise uploaded by the clients. Note that the oracle without the corresponding private key cannot decrypt the global model, and thus the privacy of the clients is guaranteed.

\phead{Security Model.} As the blockchain is employed, the transactions recorded on the ledger are hard to tamper with and transparent to everyone (including all the clients). The clients are honest-but-curious, so they may dig for other clients' data or other information from the encrypted gradients recorded on the ledger. The clients may also collaborate to break data privacy. Also, potential adversaries outside the system will try to recover the clients' private data from the encrypted gradients. Similar to previous works, it is assumed that the clients and adversaries are polynomial-time entities~\cite{yeh2022blockchain, zhao2022practical}.

\phead{Security Goal.} The security goal of our proposed AerisAI is to address the weaknesses identified in Sec. \ref{c:Introduction}: \emph{W1. requiring a centralized server}, \emph{W2. lack of auditability}, and \emph{W3. privacy issues}.

\section{The Proposed Scheme -- \emph{AerisAI}}
\label{c:Scheme}

We first briefly introduce the preliminaries and then detail the workflow.

\subsection{Preliminaries}
\subsubsection{Bilinear Pairing}
Let $G_1$ and $G_T$ be two multiplicative cyclic groups of prime order $p$. Let $g$ be the generator of $G_1$ and $e$ be a bilinear map, $e:G_1 \times G_1\to G_T$, which satisfies the following properties.
\begin{enumerate}
    \item [(1)] Bilinearity: For all $u, v\in G_1$ and any $a, b\in \mathbb{Z}^*_p$, the equation $e(u^a, v^b) = e(u, v)^{ab}$ holds. 
    \item [(2)] Non-degeneracy: $e(g, g) \neq 1$.
    \item [(3)] Computability: For $u, v\in G_1$, there exists an efficient algorithm to compute $e(u, v) \in G_T$.
\end{enumerate}

\subsubsection{Ciphertext Policy Attribute-Based Encryption (CP-ABE)}
Attribute-based encryption (ABE) \cite{sahai2005fuzzy} is a type of public-key cryptography system but it is not necessary to generate a pair of keys which consistof a public key and a private key. Instead, the ciphertext is encrypted with associated attributes, and the private key is produced according to the user's attributes. In this paper, we focus on ciphertext policy attribute-based encryption (CP-ABE) \cite{bethencourt2007ciphertext}, which is an advanced application of ABE. The concept is that encrypting the data with the specific access policy over a set of attributes, and only the user whose attributes comply with the policy is able to decrypt the ciphertext. This encryption algorithm can be divided into four parts: \emph{setup}, \emph{key generation}, \emph{encryption}, and \emph{decryption}.

\begin{itemize}[leftmargin=*]
    \item \textbf{\emph{Setup$(1^\lambda)\to (PK, MK)$}:}  In the beginning, a bilinear map $e$ and a bilinear group $G_1$ of prime order $p$ with generator $g$ are  defined according to the input regarding the security parameter $(1^\lambda)$. Then, the algorithm randomly chooses two numbers $\alpha,\beta\in\mathbb{Z}^*_p$. The public key ($PK$) and master key ($MK$) are generated as:
    \[
        PK=(g, Q=g^\beta, e(g,g)^\alpha), MK=(\beta, g^\alpha).
    \]
    \item \textbf{\emph{KeyGeneration$(MK, \mathcal{A})\to (PrivKey)$}:} In this step, the algorithm generates the set of private keys according to the associated attributes $\mathcal{A}$. It chooses a random number $r\in\mathbb{Z}^*_p$. For each attribute $i\in\mathcal{A}$, it is also assigned a random number $r_i\in\mathbb{Z}^*_p$ and takes as an input in a hash function $H:\{0,1\}^*\to G_1$, which outputs the corresponding hash value. Then, the private key ($PrivKey$) is generated as:
    \[
        PrivKey=(S=g^{\frac{\alpha+r}{\beta}},
        \forall{i}\in\mathcal{A}: S_i=g^rH(i)^{r_i}, S^{'}_{i}=g^{r_i}).
    \]
    \item \textbf{\emph{Encryption$(PK, SEK, \mathcal{P})\to (CT_{SEK})$}:}  In our proposed AerisAI, the session key $SEK$ is encrypted under the policy $\mathcal{P}$, which is a tree access structure. We define each node as $x$. Each non-leaf node represents a threshold gate whose threshold value is $q_x$. Then, the algorithm randomly chooses a polynomial $l_x$ that the degree is set as $d_x=q_x-1$. A random secret value $s\in\mathbb{Z}^*_p$ is set as $l_R(0)=s$ for the root $R$. Therefore, the setting of the value of each other node is based on the polynomial of its parent node and the index of the node, i.e., $l_x(0)=l_{parent(x)}(index(x))$. We note that the polynomial coefficients whose degrees are not zero are randomly chosen. The ciphertext ($CT$) is finally computed as below. Let $F$ denote the set of the leaf nodes in $\mathcal{P}$.
    \[
    \begin{split}
        CT=&(\mathcal{P}, \Tilde{C}=SEKe(g,g)^{\alpha s}, C=Q^s, \\
        &\forall{f}\in{F}: C_f=g^{l_f(0)}, C^{'}_{f}=H(attribute(f))^{l_f(0)}.
    \end{split}
    \]
    \item \textbf{\emph{Decryption$(CT, PrivKey)\to (SEK)$}:} The decryption method is a recursive algorithm. If the user's attributes satisfy the policy, the ciphertext can be decrypted into plaintext. Let node $f$ be a leaf node and $j$ be the attributes of $f$. If $j\in\mathcal{A}$, the decryption procedure is as follows:
    \[
    \begin{split}
        &\emph{NodeDecryption$(CT, PrivKey, f)$} \\
        &= \frac{e(S_j, C_f)}{e(S^{'}_{j}, C^{'}_f)} =\frac{e(g^{r}H(j)^{r_j}, g^{l_f(0)})}{e(g^{r_j}, H(j)^{l_f(0)})}\\
        &=\frac{e(g^r, g^{l_f(0)})e(H(j)^{r_j}, g^{l_f(0)})}{e(g^{r_j}, H(j)^{l_f(0)})}\\
        &=e(g^r, g^{l_f(0)})=e(g,g)^{r l_f(0)}
    \end{split}
    \] 
    In the case that node $x$ is a non-leaf node, the procedure should necessarily consider the children nodes $y$ of $x$. It recursively calls \emph{NodeDecryption$(CT, PrivKey, x)$} and stores the output $D_y$ of each child node $y$. $Y_x$ is defined as a $q_x$-size set of $y$ such that $D_y\neq \perp $ and $D_x$ is computed as:
    \[
    \begin{split}
        D_x&=\prod_{y\in{Y_x}}D_y^{\Delta_{k, {Y^{'}_{x}}^{(0)}}}
        (k=index(y), {Y^{'}_{x}}={index(y):y\in Y_x})\\
        &=\prod_{y\in{Y_x}}(e(g,g)^{r {l_{y}(0)}})^{\Delta_{k, {Y^{'}_{x}}^{(0)}}} \\
        &=\prod_{y\in{Y_x}}(e(g,g)^{r {l_{parent(y)}(index(y))}})^{\Delta_{k, {Y^{'}_{x}}^{(0)}}} \\
        &=\prod_{y\in{Y_x}}(e(g,g)^{r {l_{x}(k)}})^{\Delta_{k, {Y^{'}_{x}}^{(0)}}}=e(g,g)^{r l_x(0)} 
    \end{split}
    \]
    
    Here, $\Delta_{k, {Y^{'}_{x}}^{(0)}}$ is the Lagrange coefficient \cite{bethencourt2007ciphertext}. The algorithm is initially implemented on the root of $\mathcal{P}$. If all the attributes satisfy the policy, we can retrieve the secret value $s$ of the root from $l_R(0)$. Finally, the session key is 
    obtained by the user.
    \[
    \begin{split}
        \frac{\Tilde{C}}{\frac{e(C, S)}{e(g,g)^{r l_R(0)}}}
        &= \frac{\Tilde{C}}{\frac{e(g^{\beta s}, Q^{\frac{\alpha+r}{\beta}})}{e(g,g)^{rs}}}=\frac{\Tilde{C}}{\frac{e(g,g)^{s(\alpha+r)}}{e(g,g)^{rs}}}\\
        &=\frac{\Tilde{C}}{e(g,g)^{\alpha s}}=\frac{SEK e(g,g)^{\alpha s}}{e(g,g)^{\alpha s}}=SEK
    \end{split}
    \]
\end{itemize}

\subsubsection{Paillier Encryption}
Homomorphic encryption (HE) \cite{acar2018survey} allows users to compute the ciphertext without decrypting them first. After decryption, the result of the computation on the ciphertext is the same as on the plaintext. Two common types of HE include \emph{Fully Homomorphic Encryption (FHE)} and \emph{Partially Homomorphic Encryption (PHE)}. The former supports arbitrary computation, while the latter permits only the particular computation on the ciphertext. In our proposed AerisAI, we focus on PHE, which is used to encrypt the gradients to defer malicious people from obtaining the true values while allowing legitimate computation on the encrypted gradients at the same time. Gradients are mainly aggregated through the addition operation. Paillier encryption \cite{paillier1999public} is one of the state-of-the-art algorithms in PHE. The important operations are listed below.
\begin{itemize}[leftmargin=*]
    \item \textbf{\emph{PaillierSetup$(1^\lambda)\to(PPK, SK)$}:} In the setup, the algorithm randomly chooses two large prime numbers $p_1$ and $p_2$ to compute the public key and private key, which are denoted as $PPK$ and $SK$, respectively.
    \[
    \begin{split}
        PPK = (&n=p_1p_2, h=n+1), \\
        SK = (&\tau=lcm(p_1-1, p_2-1), \\
        &\mu=(L(h^\tau\bmod n^2))^{-1}\bmod n),
    \end{split}
    \]
    where $lcm(p_1-1, p_2-1)$ indicates the least common multiple of $p_1-1$ and $p_2-1$. Here, the function $L(x)$ is defined as $L(x)=\frac{x-1}{n}$.
    
    \item \textbf{\emph{PaillierEncryption$(PPK, m)\to (CT_{m})$}:} The Paillier encryption algorithm takes as input the Paillier public key $PPK$ and the plaintext $m$. Note that $m$ may be gradients, model parameters, or noise in our proposed AerisAI. It outputs the ciphertext $CT_m$ after encrypted with $PPK$.
    \[
        CT_m = \rho_m^{n}h^{m}\bmod n^2,\\
    \]
    where $\rho\in\mathbb{Z}_n$ is a random number for each encryption.
    
    \item \textbf{\emph{Add$(CT_{m_1}, CT_{m_2})\to (CT_{m_{12}})$}:} The function takes as input two pieces of ciphertext, which are encrypted with Paillier encryption. It multiplies these two pieces of ciphertext to support the addition of the two pieces of plaintext, $m_1$ and $m_2$. This function finally outputs the result $CT_{m_{12}}$ as follows.
    \[
    \begin{split}
        CT_{m_{12}} &= CT_{m_1} * CT_{m_2} = \rho_{m_1}^{n}h^{m_1} \rho_{m_2}^{n}h^{m_2} \bmod n^2 \\
        &= (\rho_{m_1}\rho_{m_2})^n(h^{m_1+m_2}) \\
        &= PaillierEncryption(PPK, m_1+m_2)
    \end{split}
    \]
    
    \item \textbf{\emph{Multiply$(CT_m, k)\to (CT_{mk})$}:} This function takes as input a ciphertext and a constant $k$. Next, this function raises the ciphertext $CT_m$ to a constant $k$ power to compute $k$ times of the plaintext $m$. The process is as follows.
    \[
    \begin{split}
        CT_{mk} &= CT_m^k = (\rho_m^{n}h^{m})^k \bmod n^2 = \rho_m^{nk}h^{mk} \bmod n^2 \\
        &= PaillierEncryption(PKK, mk)
    \end{split}
    \]
    \item \textbf{\emph{PaillierDecryption$(SK, CT_m)\to (m)$}:} The algorithm takes as input the Paillier private key $SK$ and the ciphertext $CT_m$. This algorithm decrypts $CT_m$ with $SK$, and it then outputs the original plaintext $m$. The procedure of decryption is described as follows.
    \[
    \begin{split}
        m &= (L(CT_m^\tau \bmod n^2))(\mu\bmod n^2)\bmod n \\
        &= \frac{L(\rho_m^{n}h^{m}\bmod n^2)}{L(h^\tau\bmod n^2)}\bmod n
    \end{split}
    \]
\end{itemize}

\begin{figure*}[t]
    \centering
    \includegraphics[scale=0.55]{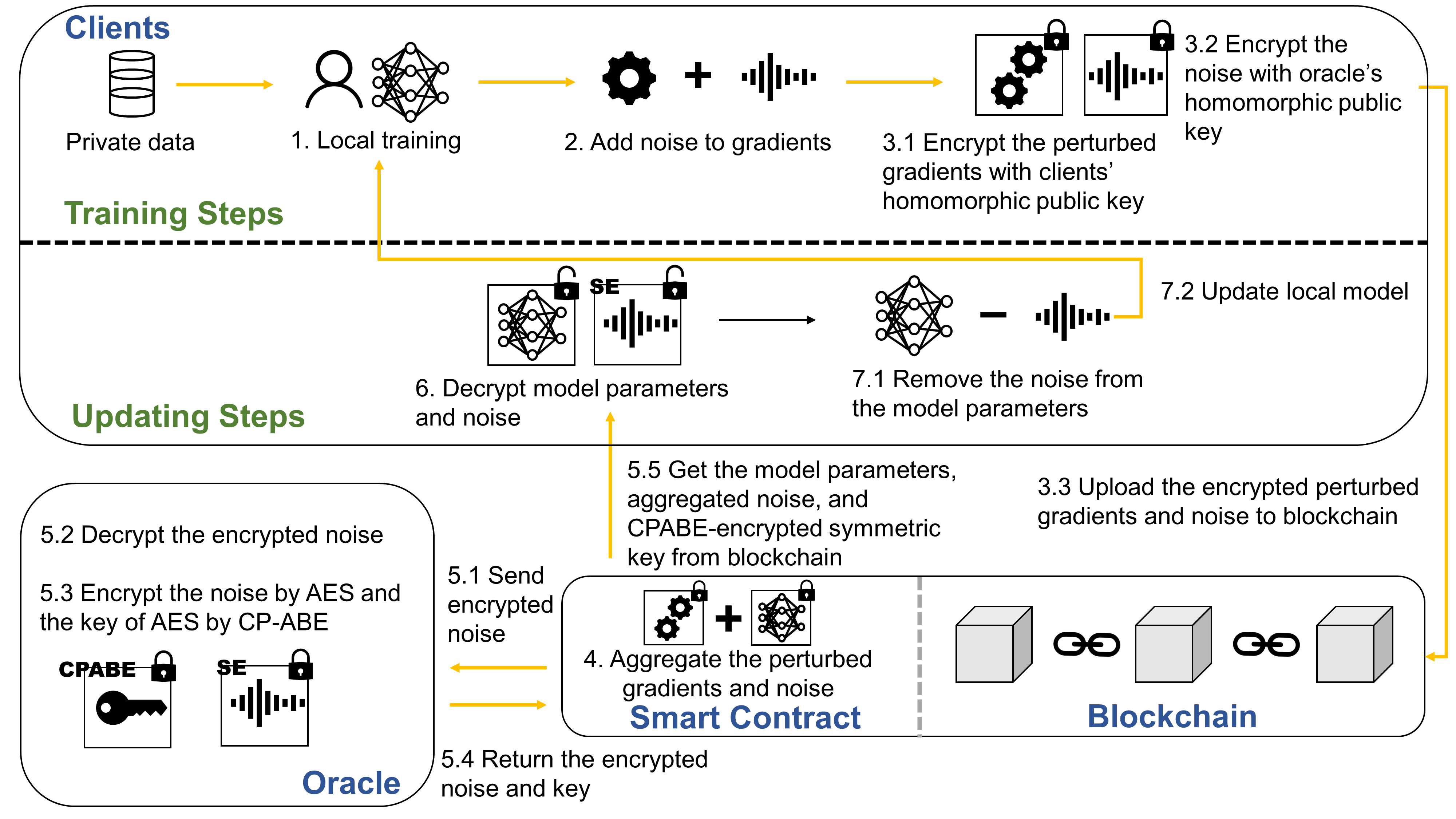}
    \caption{Overview of the proposed \emph{AriesAI}.} \label{fig:workflow}
\end{figure*}

\subsection{System Overview}
Fig. \ref{fig:workflow} shows how the proposed system work. We introduce the blockchain network and smart contract for addressing the weaknesses W1 and W2. Each client employs its own private data to train its local model and then obtains the gradients from the training process. Due to the weakness W3, the clients cannot directly upload the gradients to update the global model. The clients should add the noise to the gradients to achieve differential privacy. After injecting the noise, the clients employ the shared public key of homomorphic encryption to encrypt their perturbed gradients, which enhances the protection of gradients. The details are shown in \emph{S2) Add noise to gradients} and \emph{S3) Encrypt and upload} in Sec. \ref{subc: workflow}. Therefore, each client only sees the perturbed gradients of each other, but not the real gradients. Our system brings noise to overcome the weakness in \cite{guo2022sandbox}, where the clients are able to know the values of gradients with each other.

Although noise is injected into the gradients for privacy preservation, the noise significantly degrades the performance of the global model. As this is one of the major consequences of addressing weakness W3, we name this weakness of performance degradation after noise injection as \emph{\textbf{W3.a} Degradation of the global model performance.} To address \emph{\textbf{W3.a}}, we propose to let the clients encrypt the noise and upload them via the smart contract. Since the noise is encrypted with the oracle's public key of homomorphic encryption, the smart contract can aggregate the encrypted noise, which allows the clients to download the encrypted aggregated noise and remove them from the global model. 

The oracle helps the clients when they download the encrypted noise. The oracle decrypts the encrypted aggregated noise first and then encrypts them with a random session key ($SEK$) of the symmetric encryption. This session key is also necessarily encrypted before sending to the clients. If the oracle encrypts the aggregated noise and session key individually for each client, it is not an efficient and scalable approach. Again, as this weakness is a consequence of addressing weakness W3, we name this scalability issue of noise and session key encryption as \emph{\textbf{W3.b} scalability of encryption}. To address \emph{\textbf{W3.b}}, we introduce group key management based on ciphertext policy attribute-based encryption (CP-ABE) so that the oracle can efficiently distribute the session key ($SEK$) used for decrypting the aggregated noise to the clients. This significantly reduces the transmission cost of distributing noise in our proposed AerisAI. The details of addressing weaknesses W3.a and W3.b are shown in \emph{S3) Encrypt and upload} and \emph{S5) Download}, respectively in Sec. \ref{subc: workflow}. After the clients obtain the global model, aggregated noise, and session keys encrypted by CP-ABE, they are able to decrypt them with corresponding keys. Finally, the clients acquire the global model, and FL completes.

\subsection{Workflow}
\label{subc: workflow}

\begin{itemize}
    \item [\emph{S0)}] \emph{Prestage:} In the system initialization, we assume that the trusted authority calls \textbf{\emph{PaillierSetup$(1^\lambda)$}} to generate two key pairs of Paillier encryption for each client $P_i$ ($i\in\mathbb{Z}_N$, $N$ is the number of clients) and the oracle. The shared key pair for all the clients is denoted as $(PPK_c, SK_c)$. The oracle's public and private key pair is denoted as $(PPK_o, SK_o)$. On the other hand, the trusted authority calls \textbf{\emph{Setup$(1^\lambda)$}} to generate the public key $PK$ and master key $MK$ of CP-ABE. $PK$ is assigned to the oracle. $MK$, which comes from the trusted authority, is used to generate each client's private key $PrivKey_{P_i}$ via calling \textbf{\emph{KeyGeneration$(MK, \mathcal{A}_{P_i})$}}, where $\mathcal{A}_{P_i}$ is the attributes of $P_i$. The smart contract randomly initializes the global model parameters $\theta_G^{(0)}$. Next, it calls \textbf{\emph{PaillierEncryption$(PPK_c, \theta_G^{(0)})$}} to encrypt $\theta_G^{(0)}$ with $PPK_c$, where the encrypted global model is denoted as $CT_{\theta_G^{(0)}}$ and is stored in the smart contract. The noise $\zeta_G^{(0)}$ is initially set as $0$ and encrypted with $PPK_o$ to generate $CT_{\zeta_G^{(0)}}$ by calling \textbf{\emph{PaillierEncryption$(PPK_o, \zeta_G^{(0)})$}}. Each client $P_i$ in the blockchain has its own private dataset $D_i$ for local training. 
    
    \item [\emph{S1)}] \emph{Local training:} At each round $t$, every client trains its own local model whose parameters are initialized with $\theta_G\setminus^{(t)}$. $P_i$ feeds $D_i$ into $\theta_G\setminus^{(t)}$ to compute the loss of the data samples. The clients run the specific gradient descent algorithm for several epochs to update $\theta_G^{(t)}$. Next, $P_i$ gets the new model parameters $\theta_i^{(t+1)}$ to compute the gradients $\delta_i^{(t+1)}$,  which is the difference between $\theta_i^{(t+1)}$ and $\theta_G^{(t)}$ (i.e., $\delta_i^{(t+1)}=\theta_i^{(t+1)}-\theta_G^{(t)}$).
    
    \item [\emph{S2)}] \emph{Add noise to gradients:} After getting $\delta_i^{(t+1)}$, each client randomly chooses the noise $\zeta_i^{(t+1)}$ from the Gaussian distribution \cite{xu2022spdl} with the mean $\mu$  and variance $\sigma$. $P_i$ obtains the new perturbed gradients $\hat{\delta}_{i}^{(t+1)}$ after adding $\zeta_i^{(t+1)}$ to $\delta_i^{(t+1)}$ (i.e., $\hat{\delta}_{i}^{(t+1)}=\delta_i^{(t+1)}+\zeta_i^{(t+1)}$).
    
    \item [\emph{S3)}] \emph{Encrypt and upload:} $P_i$ encrypts perturbed gradients $\hat{\delta}_{i}^{(t+1)}$ and noise $\zeta_i^{(t+1)}$ with different public keys. For encrypting $\hat{\delta}_{i}^{(t+1)}$ with the client's public key of Paillier encryption, each client calls \textbf{\emph{PaillierEncryption$(PPK_c, \hat{\delta}_{i}^{(t+1)})$}} to obtain the encrypted result, which is denoted as $CT_{\hat{\delta}_{i}^{(t+1)}}$. Besides, $\zeta_i^{(t+1)}$ is encrypted with the oracle's public key by calling \textbf{\emph{PaillierEncryption$(PPK_o, \zeta_i^{(t+1)})$}} to obtain the encrypted noise $CT_{\zeta_i^{(t+1)}}$. Next, $P_i$ uploads $CT_{\hat{\delta}_{i}^{(t+1)}}$ and $CT_{\zeta_i^{(t+1)}}$ via the smart contract.
    
    \item [\emph{S4)}] \emph{Aggregation:} The smart contract aggregates the encrypted perturbed gradients and noise that are uploaded by the clients. First, the smart contract iteratively calls \textbf{\emph{Add$(CT_{\hat{\delta}_{i}^{(t+1)}}, CT_{\hat{\delta}_{i+1}^{(t+1)}})$}} ($\forall{i}\in\mathbb{Z}_N$) to compute the sum of all the clients' encrypted perturbed gradients $CT_{\delta_S^{(t+1)}}$. Then, it calls \textbf{\emph{Multiply$(CT_{\hat{\delta}_S^{(t+1)}}, \frac{1}{N})$}} to get the mean $CT_{\hat{\delta}_{M}^{(t+1)}}$ of $CT_{\hat{\delta}_{S}^{(t+1)}}$.
    Afterward, the smart contract invokes \textbf{\emph{Add$(CT_{\theta_G^{(t)}}, CT_{\delta_{M}^{(t+1)}})$}} to update the encrypted global model $CT_{\theta_G^{(t)}}$. Aggregating encrypted noise is similar to the process of aggregating encrypted perturbed gradients. The updated noise $CT_{\zeta_G^{(t+1)}}$ is then obtained after aggregation, which can be used to remove the total noise in each client for achieving better model accuracy.
    
    \item [\emph{S5)}] \emph{Download:} At the next round $t+1$, the clients directly download the updated encrypted global model $CT_{\theta_G^{(t+1)}}$ via the smart contract for iterative training or specific applications. To download the encrypted noise, the oracle helps assign them to the clients launching the request. The oracle invokes \textbf{\emph{PaillierDecryption$(SK_o, CT_{\zeta_G^{(t+1)}})$}} to obtain the noise $\zeta_G^{(t+1)}$. Then, $\zeta_G^{(t+1)}$ is encrypted by the symmetric-key algorithm (AES in our proposed AerisAI). Besides, the session key $SEK$ of AES is necessarily encrypted with the specific policy $\mathcal{P}$ under CP-ABE cryptosystem by calling \textbf{\emph{Encryption$(PK, SEK, \mathcal{P})$}} to acquire $CT_{SEK}$. \emph{AES$(\zeta_G^{(t+1)})$} and $CT_{SEK}$ are broadcasted to the clients. Thanks to CP-ABE, which is a broadcast encryption, our proposed AerisAI is suitable to broadcast the required parameters to multiple clients. 
    
    \item [\emph{S6)}] \emph{Decrypt model and noise:} In this step, the clients call \textbf{\emph{PaillierDecryption$(SK_c, CT_{\theta_G^{(t+1)}})$}} to get the decrypted global model parameters $\theta_G^{(t+1)}$. Note that $\theta_G^{(t+1)}$ contains the noise. After that, the clients need to decrypt $CT_{SEK}$ and \emph{AES$(\zeta_G^{(t+1)})$} for removing the noise from the decrypted global model parameters. \textbf{\emph{Decryption$(CT_{SEK}, PrivKey_{P_i})$}} is called to retrieve the session key $SEK$ by each client $P_i$. $SEK$ is used to decrypt \emph{AES$(\zeta_G^{(t+1)})$} by AES, and then the clients obtain the aggregated noise $\zeta_G^{(t+1)}$.
    
    \item [\emph{S7)}] \emph{Remove noise and update local model:} We exclude the noise from the decrypted global model parameters, i.e., $\theta_G^{(t+1)}-\zeta_G^{(t+1)}$, to allow the clients to obtain the new global model parameters without the noise, i.e., $\theta_{G\setminus}^{(t+1)}$. Finally, each client replaces its local model with $\theta_{G\setminus}^{(t+1)}$ and trains the local model with its private dataset at the next round.
\end{itemize}

\section{Security Analysis}
\label{c:Securityanalysis}
We formally analyze the security of the proposed AerisAI as follows. 

\begin{theorem}
\label{theo: Paillier}
Based on the difficulty of factoring large integers \cite{lenstra2000integer}, our proposed AerisAI maintains the privacy that the encrypted gradients and noise cannot be decrypted without the corresponding private key.
\end{theorem}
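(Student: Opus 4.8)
The plan is to argue by reduction: any probabilistic polynomial-time party that could read a protected value would yield an algorithm that factors the Paillier modulus, contradicting the assumed hardness of integer factorization. First I would pin down the adversarial model of Sec.~\ref{c:System}. The quantities to be protected are the clients' perturbed gradients $CT_{\hat{\delta}_i^{(t+1)}}$ (and their aggregates $CT_{\hat{\delta}_S^{(t+1)}}, CT_{\hat{\delta}_M^{(t+1)}}, CT_{\theta_G^{(t+1)}}$), which are Paillier ciphertexts under $PPK_c$, and the per-client noise $CT_{\zeta_i^{(t+1)}}$ (and its aggregate $CT_{\zeta_G^{(t+1)}}$), which are ciphertexts under $PPK_o$. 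The relevant attackers are: (i) honest-but-curious clients, possibly colluding, who hold $SK_c$ but not $SK_o$; (ii) the oracle, which holds $SK_o$ but not $SK_c$; and (iii) external polynomial-time adversaries who hold neither. In each case the value we wish to hide is encrypted under a key whose secret component is unavailable to that party.

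The core step is to show that recovering a Paillier secret key from its public key is polynomial-time equivalent to factoring. Given the factorization $n=p_1p_2$, one obtains $\tau=\mathrm{lcm}(p_1-1,p_2-1)$ and then $\mu=(L(h^{\tau}\bmod n^2))^{-1}\bmod n$ directly, so $SK$ follows; conversely, $\tau$ is a multiple of $\lambda(n)$, from which the factors of $n$ are extracted by the standard order-based factoring procedure. Hence, under the factoring assumption, no PPT algorithm outputs $SK_c$ given only $PPK_c$ and the public blockchain transcript, nor $SK_o$ given only $PPK_o$ and the transcript.

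Next I would rule out decryption that bypasses key recovery. For this I would invoke the one-wayness / semantic security of Paillier encryption under the Decisional Composite Residuosity Assumption -- the standard hardness assumption underpinning Paillier, which is itself no weaker than the factoring assumption -- so that $CT_{\hat{\delta}_i^{(t+1)}}$ and $CT_{\zeta_i^{(t+1)}}$ are one-way (indeed computationally indistinguishable from encryptions of unrelated values) to any party lacking the matching secret key; an algorithm that decrypts them would break this assumption and hence factoring. I would then verify that the homomorphic aggregation performed by the smart contract leaks nothing extra: by the derivations of \emph{Add} and \emph{Multiply} in the preliminaries, each published aggregate is again a well-formed Paillier ciphertext of the corresponding plaintext, so posting it on chain reveals no more than posting a fresh encryption of that plaintext, which remains hidden by the same argument. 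Likewise, the noise-distribution path of steps \emph{S5)}--\emph{S6)} only ever transmits $CT_{\zeta_G^{(t+1)}}$, the AES ciphertext $\emph{AES}(\zeta_G^{(t+1)})$, and $CT_{SEK}$, so it exposes at most the aggregated noise to an authorized client and never a Paillier secret key.

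The main obstacle I anticipate is making the collusion case precise rather than cryptographically hard: since the client coalition jointly holds $SK_c$, the gradients are by design recoverable by that coalition, so the statement must be read as protecting each party's data against the parties that are not entitled to it (clients vs.\ the oracle, everyone vs.\ external adversaries). I would therefore state explicitly which secret each ciphertext is bound to and walk through the workflow \emph{S0)}--\emph{S7)} to confirm that $SK_o$ is never sent to a client and $SK_c$ is never sent to the oracle. This is bookkeeping over the protocol rather than new cryptographic content, and once it is done the reduction above closes the argument.
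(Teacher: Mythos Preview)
Your core reduction---recovering the Paillier secret key from $PPK$ requires factoring $n$---is exactly what the paper uses, so the heart of your argument matches. However, the paper's proof of this theorem is far narrower than your proposal: it simply observes that $SK=(\tau,\mu)$ is computed from $p_1,p_2$, that an adversary would therefore need to factor $n$ to reconstruct $SK$, and that factoring is believed infeasible in polynomial time. That is the entire proof; it does not establish the converse direction of the equivalence, does not address decryption that bypasses key recovery, does not walk through the protocol transcript, and does not partition the adversary classes.

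The additional material you include is not wasted, but in the paper's decomposition it belongs to other theorems: the semantic-security / one-wayness argument under DCR is precisely the content of Theorem~\ref{theo:DCR}, and the separation-of-keys argument showing the oracle cannot read client gradients is Theorem~4. Your proposal effectively fuses Theorems~1, 2, and 4 into a single, more rigorous security statement. That buys you a genuinely tighter result---you close the gap the paper leaves open here, namely that hardness of key recovery alone does not preclude decryption---at the cost of doing more work than Theorem~\ref{theo: Paillier} as the paper states and proves it actually demands.
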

\begin{proof}
In our proposed AerisAI, the gradients and noise are encrypted by Paillier encryption. The public key is $PPK=(n, h=n+1)$ where $n$ is multiplied by two large prime numbers $p_1$ and $p_2$. In the following, we show how to reduce our encryption to be as secure as other cryptosystems based on integer factorization. The gradient $\delta$ was encrypted to generate $CT_{\delta}$ by calling \emph{PaillierEncryption$(PPK, \delta)$}. To decrypt $CT_{\delta}$, the adversary needs to recover the private key $SK=(\tau=lcm(p_1-1,p_2-1), \mu=(L(h^\tau \bmod n^2)^{-1}\bmod n)$, which is generated with the prime numbers $p_1$ and $p_2$. The adversary needs to factor $n$ into $p_1$ and $p_2$ and compute $lcm(p_1, p_2)$. However, the existence of a polynomial-time algorithm for factoring large integers on a classical computer is deemed unlikely \cite{lenstra2000integer}. The security is based on the hardness of factoring the large integer $n$, which can be regarded as the integer factoring problem.
\end{proof}

\begin{theorem}
\label{theo:DCR}
The Paillier encryption is semantically secure under the decisional composite residuosity (DCR) assumption \cite{paillier1999public}, so the gradients and noise are secure in our proposed AerisAI when the DCR assumption holds.
\end{theorem}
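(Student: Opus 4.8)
The plan is to reduce the semantic security of the Paillier scheme as used in AerisAI directly to the Decisional Composite Residuosity (DCR) assumption, following the classical argument of Paillier \cite{paillier1999public}. First I would recall the precise statement of the DCR assumption: given $n = p_1 p_2$ (a product of two large primes), it is computationally infeasible for any polynomial-time adversary to distinguish a uniformly random element of $\mathbb{Z}^*_{n^2}$ from a uniformly random $n$-th residue in $\mathbb{Z}^*_{n^2}$, i.e. an element of the form $\rho^n \bmod n^2$ for random $\rho \in \mathbb{Z}^*_n$. I would then set up the standard IND-CPA game: a challenger runs \emph{PaillierSetup$(1^\lambda)$} and hands $PPK = (n, h = n+1)$ to the adversary, who outputs two plaintexts $m_0, m_1$; the challenger picks a random bit $b$ and returns the challenge ciphertext $CT_{m_b} = \rho^n h^{m_b} \bmod n^2$; the adversary must guess $b$.

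The key steps, in order: (1) Observe that $h = n+1$ has order $n$ in $\mathbb{Z}^*_{n^2}$ and that $h^m = (1+n)^m \equiv 1 + mn \pmod{n^2}$, so the ``message part'' $h^{m_1 - m_0}$ is itself a well-defined element of $\mathbb{Z}^*_{n^2}$ that can be computed publicly from $PPK$. (2) Show that a ciphertext of $m_0$ can be turned into a ciphertext of $m_1$ by multiplying by $h^{m_1-m_0}$ times an $n$-th residue (rerandomization), which is exactly the homomorphic \emph{Add} operation from the preliminaries; conversely, an $n$-th residue multiplied by $h^{m_b}$ is distributed as a fresh encryption of $m_b$. (3) Construct a distinguisher $\mathcal{D}$ for DCR from any IND-CPA adversary $\mathcal{A}$ with non-negligible advantage: on input a challenge element $z \in \mathbb{Z}^*_{n^2}$ (either a random $n$-th residue or a uniformly random element), $\mathcal{D}$ runs $\mathcal{A}$, receives $m_0, m_1$, picks $b \leftarrow \{0,1\}$, sends $z \cdot h^{m_b} \bmod n^2$ as the challenge ciphertext, and outputs $1$ iff $\mathcal{A}$ guesses $b$ correctly. (4) Analyze the two cases: if $z$ is an $n$-th residue, $z \cdot h^{m_b}$ is a properly distributed encryption of $m_b$, so $\mathcal{A}$ wins with probability $\tfrac12 + \mathrm{Adv}_{\mathcal{A}}$; if $z$ is uniform in $\mathbb{Z}^*_{n^2}$, then $z \cdot h^{m_b}$ is uniform and independent of $b$ (since multiplication by the fixed element $h^{m_b}$ is a bijection), so $\mathcal{A}$ wins with probability exactly $\tfrac12$. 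Subtracting, $\mathcal{D}$ has DCR-advantage $\tfrac12 \mathrm{Adv}_{\mathcal{A}}$ (or $\mathrm{Adv}_{\mathcal{A}}$ depending on the accounting), which is non-negligible, contradicting DCR. (5) Conclude that no such $\mathcal{A}$ exists, hence Paillier, and therefore the encryption of gradients and noise in AerisAI, is semantically secure whenever DCR holds.

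The main obstacle — really the only subtle point — is step (4): making rigorous the claim that the uniform distribution on $\mathbb{Z}^*_{n^2}$ factors cleanly as (uniform $n$-th residue) $\times$ ($h^m$ for the message) so that the simulated challenge ciphertext is perfectly distributed in the ``real'' case and perfectly hiding in the ``random'' case. This rests on the structural fact that $\mathbb{Z}^*_{n^2} \cong \mathbb{Z}_n \times \mathbb{Z}^*_n$ with $h = 1+n$ generating the $\mathbb{Z}_n$ factor and the $n$-th residues forming the $\mathbb{Z}^*_n$ factor; I would cite this decomposition from \cite{paillier1999public} rather than re-derive it. A minor additional care point is handling the negligible-probability event that a sampled $\rho$ or $z$ is not in $\mathbb{Z}^*_{n^2}$ (i.e. shares a factor with $n$), which only changes advantages by a negligible amount and can be absorbed into the final bound. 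Everything else is routine bookkeeping, and since Theorem \ref{theo: Paillier} already established that recovering the private key requires factoring $n$, the two theorems together give both one-wayness and semantic security of the scheme under standard assumptions.
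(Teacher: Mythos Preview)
Your proposal is correct and, at the conceptual level, follows the same route as the paper: both argue that semantic security of the Paillier ciphertexts used for gradients and noise follows from the DCR assumption. The paper's own proof, however, is far less formal than what you outline---it simply restates the form of a Paillier ciphertext, recalls the definition of an $n$-th residue modulo $n^2$, asserts that DCR makes it intractable to decide residuosity, and concludes informally that an adversary cannot tell whether two ciphertexts encrypt the same gradient. There is no explicit IND-CPA game, no constructed distinguisher, and no analysis of the two cases; the structural decomposition $\mathbb{Z}^*_{n^2} \cong \mathbb{Z}_n \times \mathbb{Z}^*_n$ that you correctly identify as the crux of step~(4) is never invoked. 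What you have written is essentially the standard reduction from \cite{paillier1999public} carried out in full, which is strictly more than the paper provides; your version would stand on its own as a proof, whereas the paper's version is closer to a one-paragraph sketch that defers the real work to the cited reference.
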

\begin{proof}
The gradient $\delta_1$ is encrypted as $CT_{\delta_1}=\rho_{\delta_1}^{n}h^{\delta_1}\bmod n^2$. We define a number $z$ is said to be the $n-$th residue modulo $n^2$ if there exists a number $\rho_{\delta_1}\in\mathbb{Z}_{n^2}$ such that $z = \rho_{\delta_1}^{n}\bmod n^2$. Based on the DCR assumption \cite{paillier1999public}, it is intractable for deciding whether there exists a $\rho_{\delta_{1}}$ in polynomial time such that $z = \rho_{\delta_1}^{n}\bmod n^2$. Therefore, an adversary cannot distinguish whether two encrypted gradients, $CT_{\delta_1}$ and $CT_{\delta_2}$, represent the same gradients (i.e., $\delta_1=\delta_2$) and thus they are not able to learn information about the gradients. The security of Paillier encryption is based on the computational infeasibility of the DCR assumption, which ensures the privacy preservation.


\end{proof}

\begin{theorem}
\label{theo: CPABE}
Based on the security of CP-ABE, the encrypted session key is collusion-resistant against the colluding clients.
\end{theorem}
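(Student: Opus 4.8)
The plan is to reduce collusion resistance of the encrypted session key directly to the security of the underlying CP-ABE scheme of \cite{bethencourt2007ciphertext}. First I would fix notation: let $\mathcal{C}=\{P_{i_1},\dots,P_{i_m}\}$ be an arbitrary set of colluding clients, each $P_{i_j}$ holding the key $PrivKey_{P_{i_j}}$ output by \emph{KeyGeneration}$(MK,\mathcal{A}_{P_{i_j}})$, and suppose that no single client in $\mathcal{C}$ has an attribute set satisfying the access policy $\mathcal{P}$ under which $CT_{SEK}=$\emph{Encryption}$(PK,SEK,\mathcal{P})$ was produced. The statement to establish is that the joint view $\big(\{PrivKey_{P_{i_j}}\}_{j=1}^{m},\,CT_{SEK}\big)$ reveals nothing about $SEK$ to any polynomial-time adversary, and hence the colluders cannot recover $SEK$ and subsequently decrypt \emph{AES}$(\zeta_G^{(t+1)})$.

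The structural reason this holds is the per-user randomization inside \emph{KeyGeneration}: each call draws a fresh $r\in\mathbb{Z}_p^*$, so the key of $P_{i_j}$ carries its own value $r^{(j)}$, appearing both in $S^{(j)}=g^{(\alpha+r^{(j)})/\beta}$ and, for every $i\in\mathcal{A}_{P_{i_j}}$, in $S_i^{(j)}=g^{r^{(j)}}H(i)^{r_i^{(j)}}$. Running \emph{NodeDecryption} on a leaf $f$ with attribute components taken from $P_{i_j}$ yields $e(g,g)^{r^{(j)} l_f(0)}$, so the recovered exponent is bound to that user's $r^{(j)}$. In the recursive step for a non-leaf node $x$, the Lagrange interpolation $\prod_{y\in Y_x} D_y^{\Delta_{k_y}}$ (where $\Delta_{k_y}$ is the Lagrange coefficient) telescopes to $e(g,g)^{r\,l_x(0)}$ only when all the $D_y$ share one common $r$; mixing components with distinct $r^{(j)}$ produces $\prod_{y} e(g,g)^{r^{(j_y)} l_x(k_y)\Delta_{k_y}}$, which is not of the form $e(g,g)^{r^{\star} l_x(0)}$ for any $r^{\star}$, since a degree-$d_x$ polynomial in the exponent cannot be reconstructed from evaluations scaled by mismatched multipliers. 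Consequently no ``stitched'' key spanning several colluders reaches $e(g,g)^{rs}$, so the final unmasking $\tilde C\big/\!\big(e(C,S)/e(g,g)^{r l_R(0)}\big)$ cannot be carried out.

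To turn this into a formal statement I would give a reduction. Suppose a PPT adversary $\mathcal{B}$, given $\{PrivKey_{P_{i_j}}\}_{j=1}^{m}$ and $CT_{SEK}$, distinguishes $CT_{SEK}$ from an encryption of an independent random key with non-negligible advantage. I construct an adversary $\mathcal{A}$ against the CP-ABE security game of \cite{bethencourt2007ciphertext}: $\mathcal{A}$ receives $PK$, issues key-extraction queries on each $\mathcal{A}_{P_{i_j}}$ (all legal, since none satisfies $\mathcal{P}$), submits two random session keys $SEK_0,SEK_1$ together with $\mathcal{P}$ as the challenge, forwards the challenge ciphertext and the extracted keys to $\mathcal{B}$, and outputs $\mathcal{B}$'s guess. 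Because the colluders' attribute sets form exactly an admissible query set in that game, $\mathcal{A}$ perfectly simulates $\mathcal{B}$'s environment and inherits its advantage, contradicting the assumed security of CP-ABE (which rests on the generic bilinear group / decisional bilinear Diffie--Hellman hardness used in \cite{bethencourt2007ciphertext}). Hence no such $\mathcal{B}$ exists and the encrypted session key is collusion-resistant.

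The main obstacle I anticipate is making the ``mismatched $r$'' argument airtight rather than merely suggestive: a bare group-theoretic treatment must rule out \emph{all} algebraic combinations of the colluders' key elements, including pairings among the $S^{(j)},S_i^{(j)},{S_i'}^{(j)}$ across different $j$, which is precisely the delicate point handled by the generic-group analysis in the original CP-ABE proof. For that reason I would present the structural observation only as intuition and rely on the reduction above, invoking the established security of CP-ABE as a black box.
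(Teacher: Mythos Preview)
Your proposal is correct and in fact subsumes the paper's own argument. The paper's proof is purely the informal blinding intuition: it observes that $\tilde C=SEK\cdot e(g,g)^{\alpha s}$, that recovering $e(g,g)^{\alpha s}$ requires cancelling the mask $e(g,g)^{rs}$, and that because each client's private key carries its own fresh $r$, colluders cannot combine key components across users to strip this mask. That is exactly the structural observation you develop in your second paragraph, though you spell out more concretely why the Lagrange interpolation at a non-leaf node fails when the children carry mismatched $r^{(j)}$'s.

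Where you genuinely diverge is in adding the black-box reduction to the CP-ABE security game of \cite{bethencourt2007ciphertext}. The paper does not do this; it stops at the intuitive blinding argument and implicitly leans on the original CP-ABE paper for rigor. Your reduction is the cleaner route: it sidesteps the need to enumerate all algebraic combinations of colluders' key elements (the concern you yourself flag), and it makes the dependence on the underlying assumption explicit rather than hand-waved. What the paper's shorter argument buys is brevity and a direct mechanistic explanation tied to the scheme's concrete structure; what your approach buys is an actual security statement with a well-defined adversary and advantage, which is what one would want if the theorem were to be cited downstream.
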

\begin{proof} 
In CP-ABE, the message $m$ is encrypted as $\Tilde{C}=m\cdot e(g,g)^{\alpha s})$, where $e:G_1\times G_1\to G_T$, $\alpha,s\in\mathbb{Z}^*_p$. We also encrypt the session key of AES $SEK$ with the public key $PK$ and policy $\mathcal{P}$ based on CP-ABE. By calling $Encryption(PK, SEK, \mathcal{P})$, we obtain the encrypted session key $\Tilde{C}_{SEK}=SEK\cdot e(g,g)^{\alpha s}$ (i.e., the message $m$ in the original CP-ABE scheme is replaced with the session key $SEK$). Therefore, the attacker who wants to get $SEK$ must match $l_{f}(0)$ in $C_f$ from $CT_{SEK}$ with $S$ components from colluding the user's private key for an attribute $i$ that the attacker does not have, to recover $e(g,g)^{\alpha s}$. However, $e(g,g)^{\alpha s}$ is blinded by the random value $e(g,g)^{rs}$, where $r$ is a unique random number for each client. Therefore, the attacker cannot obtain the $SEK$ by a collusion attack because of the blinding random value from each user's private key.
\end{proof}
\begin{theorem}
The oracle on the blockchain cannot break privacy to recover the information of the clients' private data with its key pair.
\end{theorem}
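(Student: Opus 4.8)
The plan is to argue by first enumerating exactly what the oracle holds and then showing that every value on the ledger that depends on any client's private data is encrypted under a key the oracle does not possess. Concretely, the oracle's view consists of its own Paillier key pair $(PPK_o, SK_o)$; the CP-ABE \emph{public} key $PK$ (but neither the master key $MK$ nor any client private key $PrivKey_{P_i}$); the clients' Paillier public key $PPK_c$; and everything written to the blockchain, namely the encrypted global model $CT_{\theta_G^{(t)}}$, the per-client encrypted perturbed gradients $CT_{\hat{\delta}_i^{(t+1)}}$, their aggregates $CT_{\hat{\delta}_S^{(t+1)}}$ and $CT_{\hat{\delta}_M^{(t+1)}}$, the encrypted per-client and aggregated noise $CT_{\zeta_i^{(t+1)}}$ and $CT_{\zeta_G^{(t+1)}}$, together with the AES-encrypted aggregated noise and $CT_{SEK}$ that the oracle itself produces in step \emph{S5}. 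The first step is the observation that every quantity which is a function of the training data $D_i$ --- the gradients $\delta_i^{(t+1)}$, the perturbed gradients $\hat{\delta}_i^{(t+1)}$, their sums and means, and the global model $\theta_G^{(t)}$ --- appears on the ledger only inside ciphertexts produced by \emph{PaillierEncryption}$(PPK_c,\cdot)$, whose matching secret key $SK_c$ is never given to the oracle.

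Second, I would invoke Theorems~\ref{theo: Paillier} and~\ref{theo:DCR}: since the oracle is a polynomial-time entity that does not hold $SK_c$, under the hardness of integer factoring and the DCR assumption it can neither decrypt any of these ciphertexts nor distinguish encryptions of distinct data-dependent values. Formally this is a reduction --- any oracle strategy that, from its view, outputs a non-trivial predicate of some $D_i$ would yield a distinguisher against the semantic security of Paillier under $PPK_c$, contradicting Theorem~\ref{theo:DCR}. The additive homomorphism gives the oracle no leverage here, because homomorphic addition and scalar multiplication keep the result encrypted under $PPK_c$.

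Third, I would account for what the oracle \emph{can} legitimately recover, namely the noise: using $SK_o$ it obtains $\zeta_i^{(t+1)}$ and $\zeta_G^{(t+1)}$, and from its own $SEK$ and the AES ciphertext it again obtains only $\zeta_G^{(t+1)}$. The key point is that each $\zeta_i^{(t+1)}$ is sampled in step \emph{S2} independently from a fixed Gaussian, hence is statistically independent of $D_i$ and of every other client's data; therefore the joint distribution of all values the oracle decrypts can be reproduced exactly from the public parameters and fresh Gaussian draws alone. I would package this into a simulator that, given only the public parameters and independent Gaussian samples, outputs a distribution over oracle-views that is computationally indistinguishable from the real one; the existence of such a simulator is precisely the statement that the oracle learns nothing about the clients' private data.

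The main obstacle I anticipate is making the ``the noise is harmless'' argument airtight in the presence of the homomorphically aggregated ciphertexts: one must verify that the oracle cannot combine its decrypted noise with the data-dependent Paillier ciphertexts to strip off the noise and expose a raw gradient --- for instance, it must not be able to form the product of $CT_{\hat{\delta}_i^{(t+1)}}$ with a fresh encryption of $-\zeta_i^{(t+1)}$ under $PPK_c$ and then decrypt the result. This is ruled out because re-encrypting $-\zeta_i^{(t+1)}$ requires nothing secret, yet decrypting the resulting product still requires $SK_c$; carefully spelling this out, and confirming that no intermediate aggregate (the running sums, the mean, or the updated global model) is ever exposed in the clear to the oracle at any step of the workflow, is the delicate part of the proof.
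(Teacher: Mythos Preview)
Your proposal is correct and follows the same skeleton as the paper's proof: the oracle's key pair $(PPK_o,SK_o)$ is independent of the clients' pair $(PPK_c,SK_c)$, so by Theorem~\ref{theo: Paillier} the oracle cannot decrypt the data-dependent ciphertexts, and the only quantities it \emph{can} decrypt---the per-client and aggregated Gaussian noise---carry no information about the $D_i$.

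Where you diverge from the paper is in rigor and scope. The paper's argument stops at ``cannot decrypt'' via Theorem~\ref{theo: Paillier} and an informal ``those data do not include the information about the clients' data.'' You instead (i) invoke semantic security (Theorem~\ref{theo:DCR}) and frame the claim as a reduction to a DCR distinguisher, which is the right tool since ``cannot decrypt'' alone does not preclude partial leakage; (ii) cast the noise-is-harmless step as a simulator that regenerates the oracle's view from public parameters and fresh Gaussian draws, which is a clean formalization the paper omits; and (iii) explicitly dispose of the homomorphic re-encryption attack---forming $CT_{\hat{\delta}_i^{(t+1)}}\cdot\text{\emph{PaillierEncryption}}(PPK_c,-\zeta_i^{(t+1)})$ and attempting to read off $\delta_i^{(t+1)}$---which the paper does not mention at all. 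Each of these additions genuinely strengthens the argument without changing its direction; the paper's proof would benefit from them.
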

\begin{proof}
In our proposed AerisAI, we assume the oracle is honest-but-curious, which means that the oracle may want to recover the client's private data from the gradients. The oracle's public and private key pair of Paillier encryption $(PPK_o, SK_o)$ is independent of the client's key pair of Paillier encryption $(PPK_c, SK_c)$. Therefore, the oracle cannot derive $SK_c$ from $SK_o$. Based on Theorem \ref{theo: Paillier}, the oracle cannot decrypt the encrypted perturbed gradients $\hat{\delta}_{i}^{(t+1)}$ uploaded by the clients with an efficient (polynomial-time) algorithm. As a result, the oracle can only decrypt the encrypted global noise $CT_{\zeta_G^{(t+1)}}$ and individual noise $CT_{\zeta_i^{(t+1)}}$ each client randomly chooses from the Gaussian distribution. However, those data do not include the information about the clients' data.
\end{proof}

\section{Evaluation}
\label{c:Evaluatoin}
We evaluate the effectiveness of the proposed AerisAI with other baseline approaches in terms of functionality, model accuracy, and efficiency.

\subsection{Functionality Comparisons}
For functionality comparisons, we compare our proposed \texttt{AerisAI} with the state-of-the-art FL schemes, including \texttt{PrivacyDL+HE}~\cite{aono2017privacy}, \texttt{SGX}~\cite{kalapaaking2022blockchain}, \texttt{Sandbox}~\cite{guo2022sandbox}, \texttt{SPDL}~\cite{xu2022spdl}, and \texttt{DS2PM}~\cite{chen2021ds2pm}. \texttt{PrivacyDL+HE} \cite{aono2017privacy} is a FL framework with additively homomorphic encryption to protect the gradients. The other compared baselines~\cite{kalapaaking2022blockchain, guo2022sandbox, xu2022spdl, chen2021ds2pm} are the blockchain-based collaborative AI framework and introduce different methods to protect the data privacy. Table \ref{table: comparison} compares our proposed \texttt{AerisAI} with other baselines, where our proposed AerisAI fulfills \emph{all the security requirements}. Please note that the security requirements in Table \ref{table: comparison} correspond to the weaknesses W1, W2, W3, and W3.a identified above in this paper. The comparisons of how our proposed AerisAI and the other baselines address weakness W3.b, i.e., \emph{scalability} will be evaluated in Sec. \ref{c:efficiency}.

\subsection{Performance Evaluation}\label{c:perf_comp}
In addition to the functionality comparisons above, we compare the performance of the proposed AerisAI in terms of model accuracy and efficiency with the state-of-the-art baselines. 

\begin{table*}[ht]
    \scriptsize
    \caption{\label{table: comparison} Functionality comparisons.}
    \centering
    \begin{threeparttable}
    \resizebox{\linewidth}{!} 
    {
        \setlength{\tabcolsep}{7mm} 
        {
        \begin{tabular}{cccccc}
            \toprule[2pt]
            \diagbox{Scheme}{Function} & \textbf{Decentralization} & \textbf{Auditability} & \begin{tabular}[c]{@{}l@{}}\textbf{Gradients} \\ \textbf{perturbation}\end{tabular} & \begin{tabular}[c]{@{}l@{}}\textbf{Gradients}\\ \textbf{masking}\end{tabular} & \begin{tabular}[c]{@{}l@{}}\textbf{Group key} \\ \textbf{management}\end{tabular}\\ \hline
            \texttt{PrivacyDL+HE}~\cite{aono2017privacy} & - & - & - & \checkmark & -     \\ 
            \texttt{SGX}~\cite{kalapaaking2022blockchain} & \checkmark & \checkmark & - & - & -     \\ 
            \texttt{Sandbox}~\cite{guo2022sandbox} & \checkmark & \checkmark & - & - & -     \\ 
            \texttt{SPDL}~\cite{xu2022spdl} & \checkmark & \checkmark & \checkmark & - & -     \\ 
            \texttt{DS2PM}~\cite{chen2021ds2pm} & \checkmark & \checkmark & - & \checkmark & -     \\ 
            
            \textbf{\texttt{AerisAI (Ours)}} & \textbf{\checkmark} & \textbf{\checkmark} & \textbf{\checkmark} & \textbf{\checkmark} & \textbf{\checkmark}     \\ 
            \bottomrule[2pt]
        \end{tabular}
        }
    }
    \end{threeparttable}
\end{table*}

\subsubsection{Experimental Settings}
\phead{Blockchain.} We create a blockchain network with Hyperledger Fabric \cite{androulaki2018hyperledger}, a widely adopted architecture adopted by many research works~\cite{sousa2018byzantine, sharma2018databasify, sharma2019blurring}, to evaluate our proposed AerisAI. Hyperledger Fabric focuses on consortium networks that enable enterprises and organizations to support their needs.

\phead{Datasets.} We evaluate the model performance on two widely-adopted benchmark datasets, i.e., \emph{MNIST}~\cite{deng2012mnist}, \emph{CIFAR-10}~\cite{krizhevsky2009learning}, and \emph{CIFAR-100} \cite{krizhevsky2009learning}. MNIST is a handwritten digits dataset that consists of 60,000 training data and 10,000 testing data with 10 classes. CIFAR-10 is an object recognition dataset that consists of 50,000 training data and 10,000 testing data with 10 classes, including airplanes, automobiles, birds, cats, deers, dogs, frogs, horses, ships, and trucks. CIFAR-100 is similar to CIFAR-10, except it has 100 classes. CIFAR-100 also consists of 50,000 training data and 10,000 testing data, so each class contains 600 images.   

\phead{Data partition, model architecture, and training.} 
For the two datasets above, we partition and distribute the training data to all the clients equally and randomly. We evaluate the accuracy of the global model with the testing data. The local and global models are Multi-layer Perceptrons (MLPs) for MNIST. Besides, we use SqueezeNet \cite{iandola2016squeezenet} as the models for CIFAR-10 and CIFAR-100. For MNIST, the local and global models are MLPs with dimensions 784 (input layer), 128 (hidden layer), 64 (hidden layer), and 10 (output layer). For CIFAR-10 and CIFAR-100, each client trains the lightweight SqueezeNet with 6x fewer parameters to classify the image categories due to the storage limit in Hyperledger Fabric. The maximum block size is 100 MB \cite{HyperDoc}, and any transaction larger than this value will be rejected. In our experiment, each transaction, which includes the encrypted model parameters and noises, is approximately 98 MB. Despite using the lightweight SqueezeNet to evaluate performance, we have integrated our proposed AerisAI with Hyperledger Fabric, making it potentially applicable in practical scenarios. All models are trained by each client with Adam optimization~\cite{kingma2014adam}. The learning rate is set as $0.001$ for MNIST and CIFAR10, and $0.01$ for CIFAR100.. The experiments are conducted on a server with an AMD Ryzen 9 5900X@3.7GHz CPU, NVIDIA 3080 Ti GPU, and 48GB main memory. 

\phead{Baselines.}
To demonstrate the effectiveness of our proposed AerisAI, we compare our work with four well-known or state-of-the-art baselines. i) \texttt{Local Training}. Each client trains the model with only its private dataset. In other words, the clients do not collaboratively train a global model. ii) \texttt{Standard Averaging Federated Learning (SAFL)}~\cite{mcmahan2017communication}. This is the traditional FL framework that the server updates the global model by computing the average of the clients' gradients. iii) \texttt{Centralized Training}. This is the centralized machine learning algorithm, which can be viewed as the upper bound of the model accuracy. All the data are merged at the server, and the centralized server leverages the whole data to train the global model. In other words, the clients directly share their data without considering any data privacy issues. The clients do not collaboratively train a global model in the FL framework. iv) \texttt{SPDL}~\cite{xu2022spdl}. This is the state-of-the-art decentralized collaborative AI framework.


Please note that the accuracy evaluation does not include the baselines \texttt{PrivacyDL+HE}, \texttt{Sandbox}, and \texttt{SGX} that were compared in the functionality comparisons. This is because the overall framework of \texttt{PrivacyDL+HE} is very similar to \texttt{SAFL}, but \texttt{PrivacyDL+HE} employs homomorphic encryption, which achieves similar model accuracy but is significantly slower than \texttt{SAFL}. Therefore, we choose \texttt{SAFL} for comparisons. In addition, \texttt{SGX} and \texttt{Sandbox} do not protect the gradients (i.e., lacking privacy preservation), and thus are not chosen as baselines in our accuracy evaluation. Although \texttt{DS2PM} encrypts the model parameters, the encrypted model parameters are decrypted before aggregation, which results in privacy leakage. Therefore, it is not included in our accuracy evaluations.

\begin{figure*}[ht]
    \centering
    \subfigure[][5 clients.]{
        \label{fig:2a}
        \includegraphics[width=0.6\columnwidth]{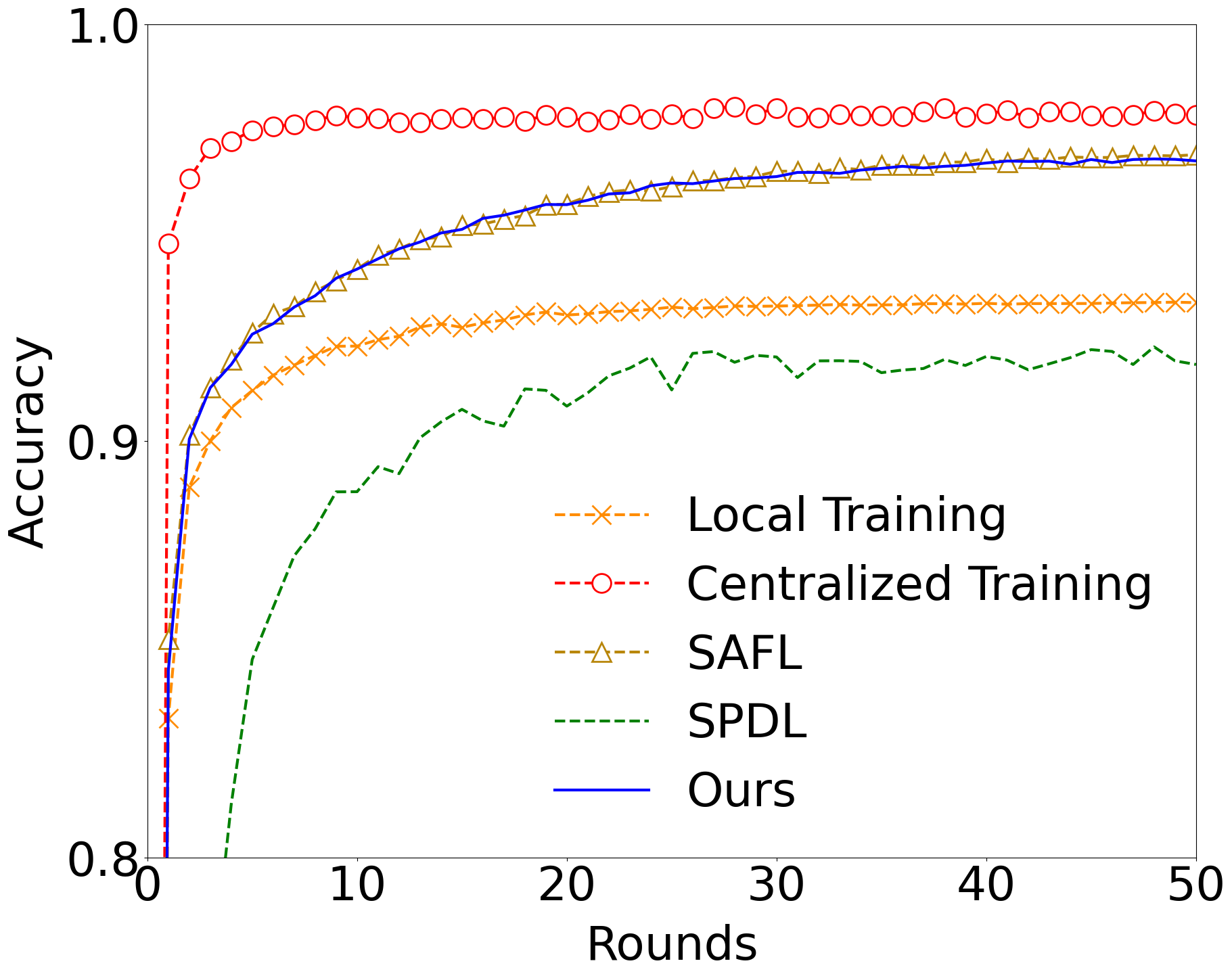}
    }
    \subfigure[][10 clients.]{
        \label{fig:2b}
        \includegraphics[width=0.6\columnwidth]{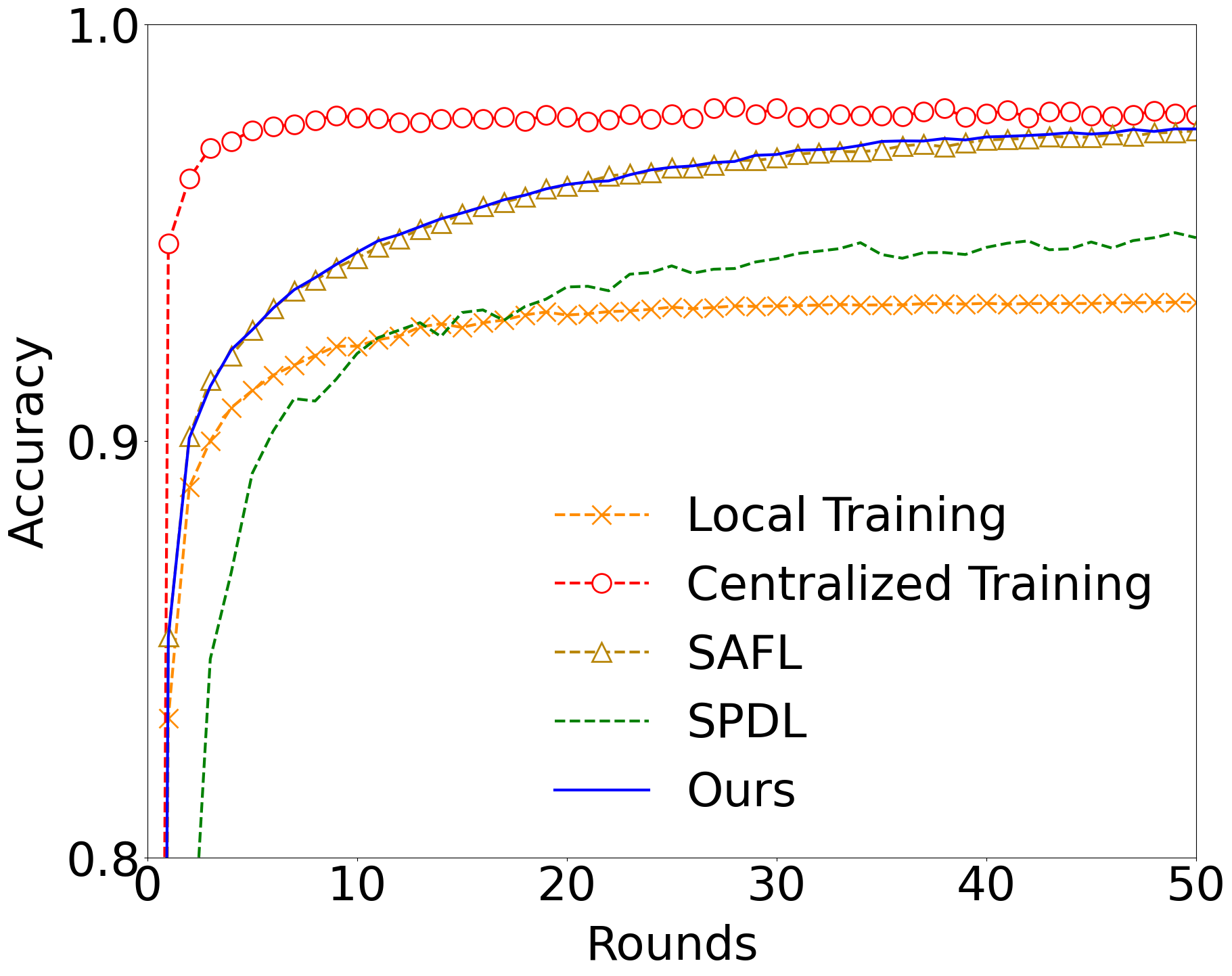}
    }
    \subfigure[][15 clients.]{
        \label{fig:2c}
        \includegraphics[width=0.6\columnwidth]{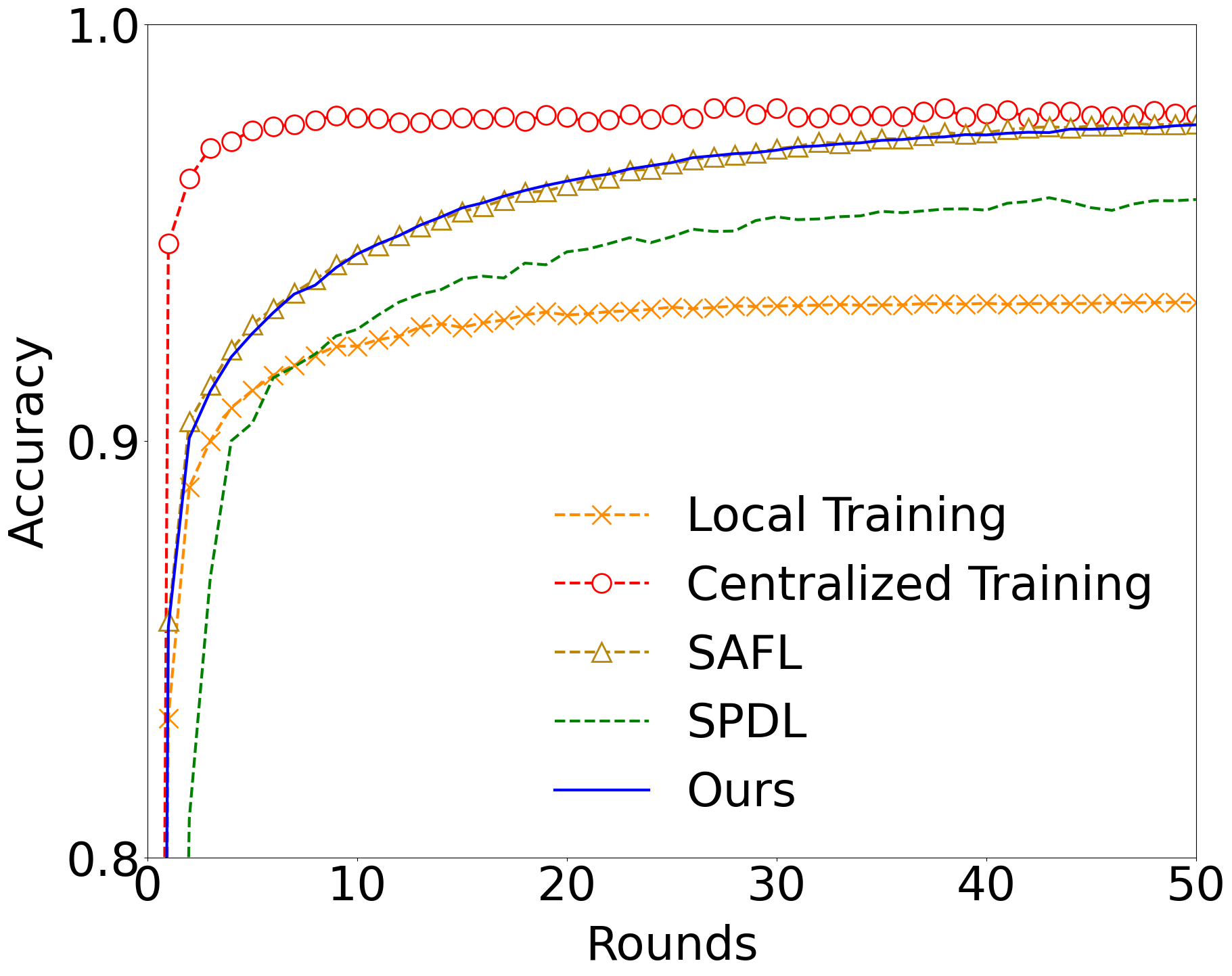}
    }
    \caption{Accuracy of different numbers of clients for \emph{MNIST}.}
    \label{fig:MNIST_RoundAccu}
\end{figure*}

\subsubsection{Model Performance}
We analyze the accuracy of the global models with different schemes. For \texttt{Local Training}, the clients train their local models with only their private data, and thus we evaluate the performance of each local model and compute the average. We present the accuracy on MNIST with different numbers of clients in Fig. \ref{fig:MNIST_RoundAccu}. The x-axis represents the training rounds, where \emph{one round} indicates that the global model is updated with the aggregated gradients once. For \texttt{Local Training}, \emph{one round} is equivalent to one epoch for training the local model. We evaluate the accuracy of the global model with different numbers of clients, i.e., $\{5, 10, 15\}$. \texttt{Local Training} is always inferior to our proposed \texttt{AerisAI} (denoted as \texttt{Ours} in the figure) because each client trains the local model with only its private data without collaboration. The performance of \texttt{AerisAI} (i.e., \texttt{ours}) outperforms \texttt{SPDL} while achieving similar results to \texttt{SAFL}, quite close to \texttt{Centralized Training}. However, \texttt{SAFL} neither achieves decentralization nor protects the gradients, which has the risks of a single point failure and privacy leakage. Please note that \texttt{Centralized Training} does not employ any privacy preservation techniques and does not consider the decentralized factor either. \texttt{Centralized Training} is adopted here as the performance upper bound for the baselines only. In addition, our proposed \texttt{AerisAI} performs well with different numbers of clients, indicating that \texttt{AerisAI} is robust and able to be deployed in different application scenarios. 

\begin{figure*}[ht]
    \centering
    \subfigure[][5 clients.]{
        \label{fig:3a}
        \includegraphics[width=0.6\columnwidth]{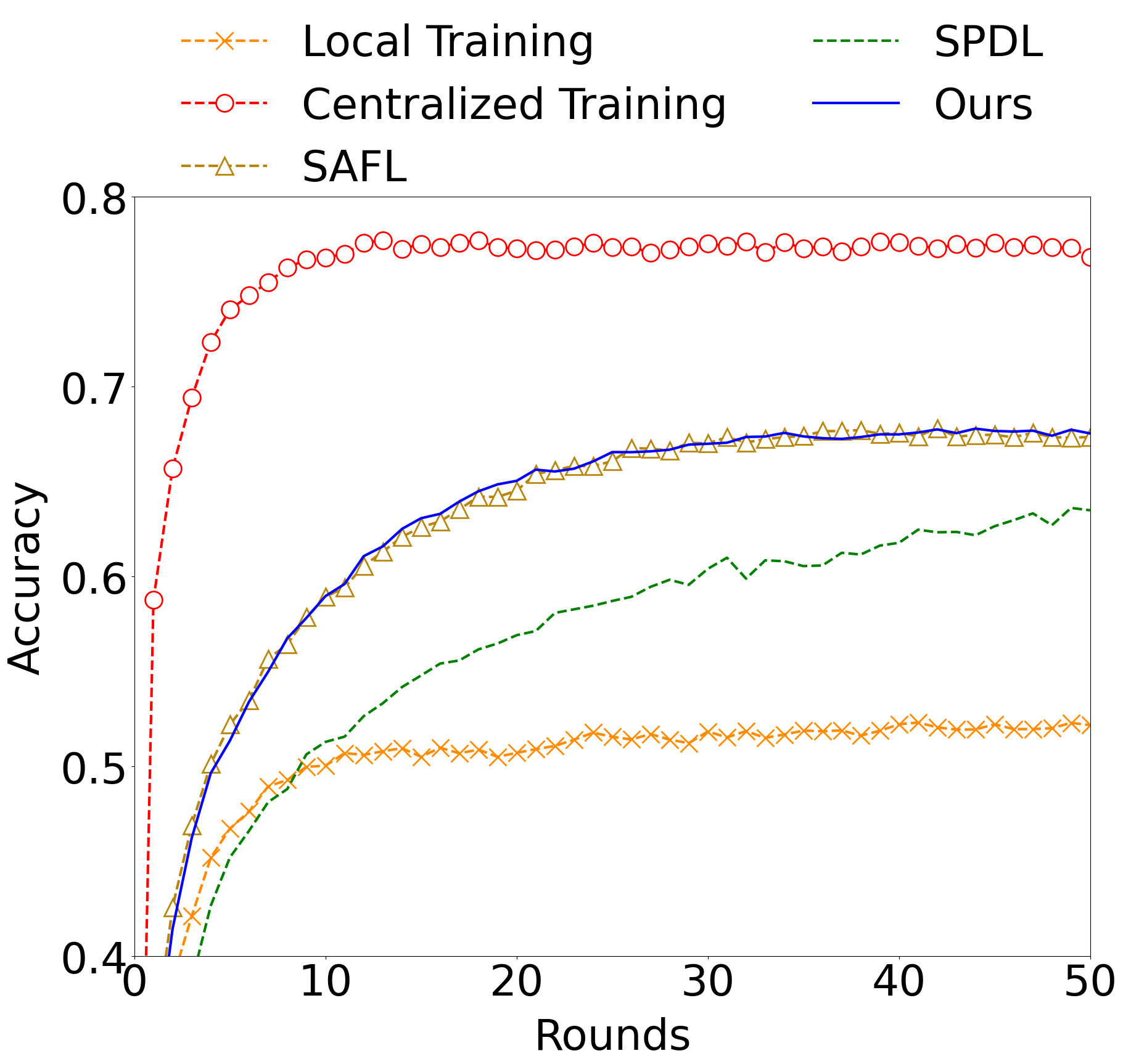}
    }
    \subfigure[][10 clients.]{
        \label{fig:3b}
        \includegraphics[width=0.6\columnwidth]{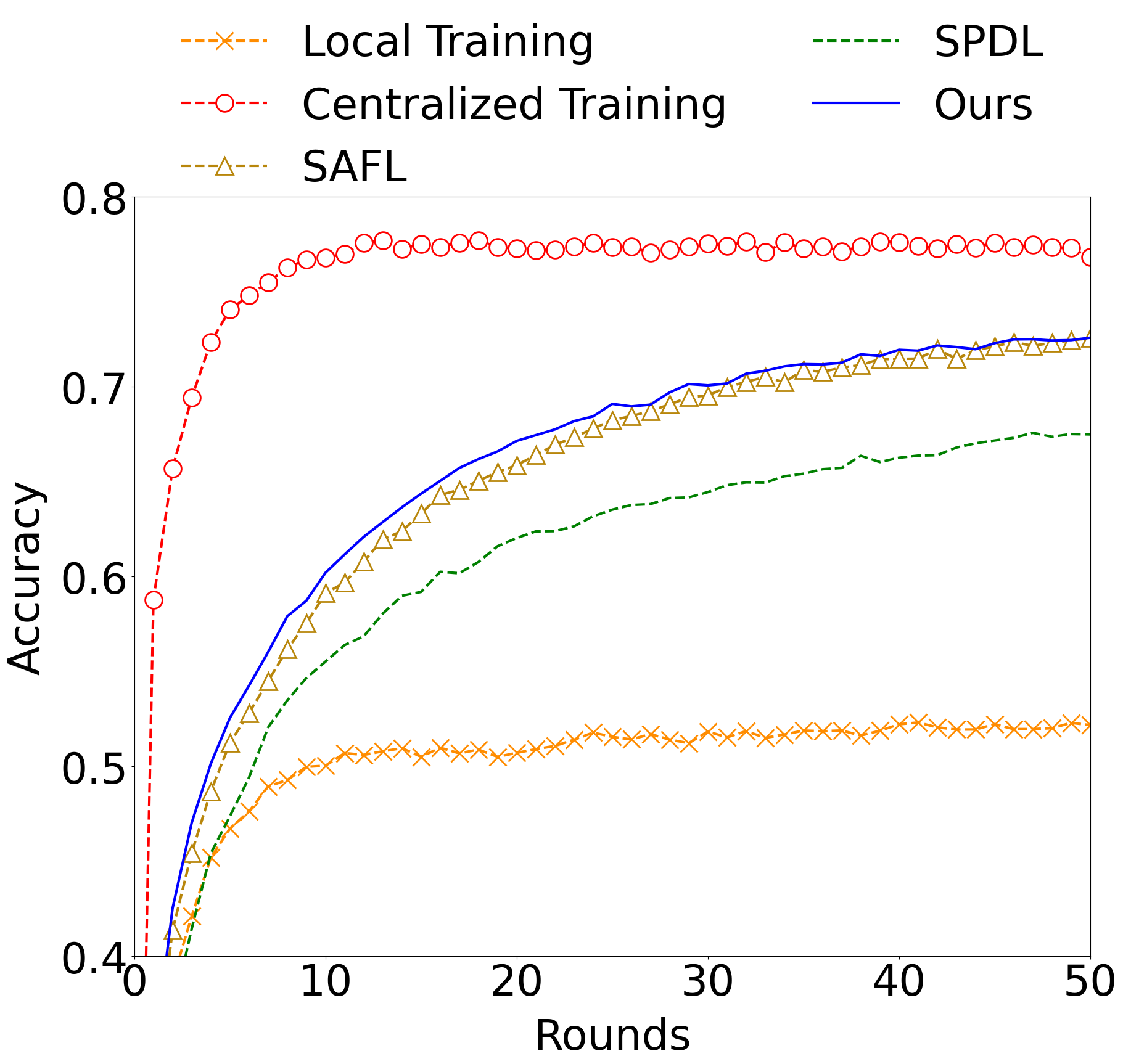}
    }
    \subfigure[][15 clients.]{
        \label{fig:3c}
        \includegraphics[width=0.6\columnwidth]{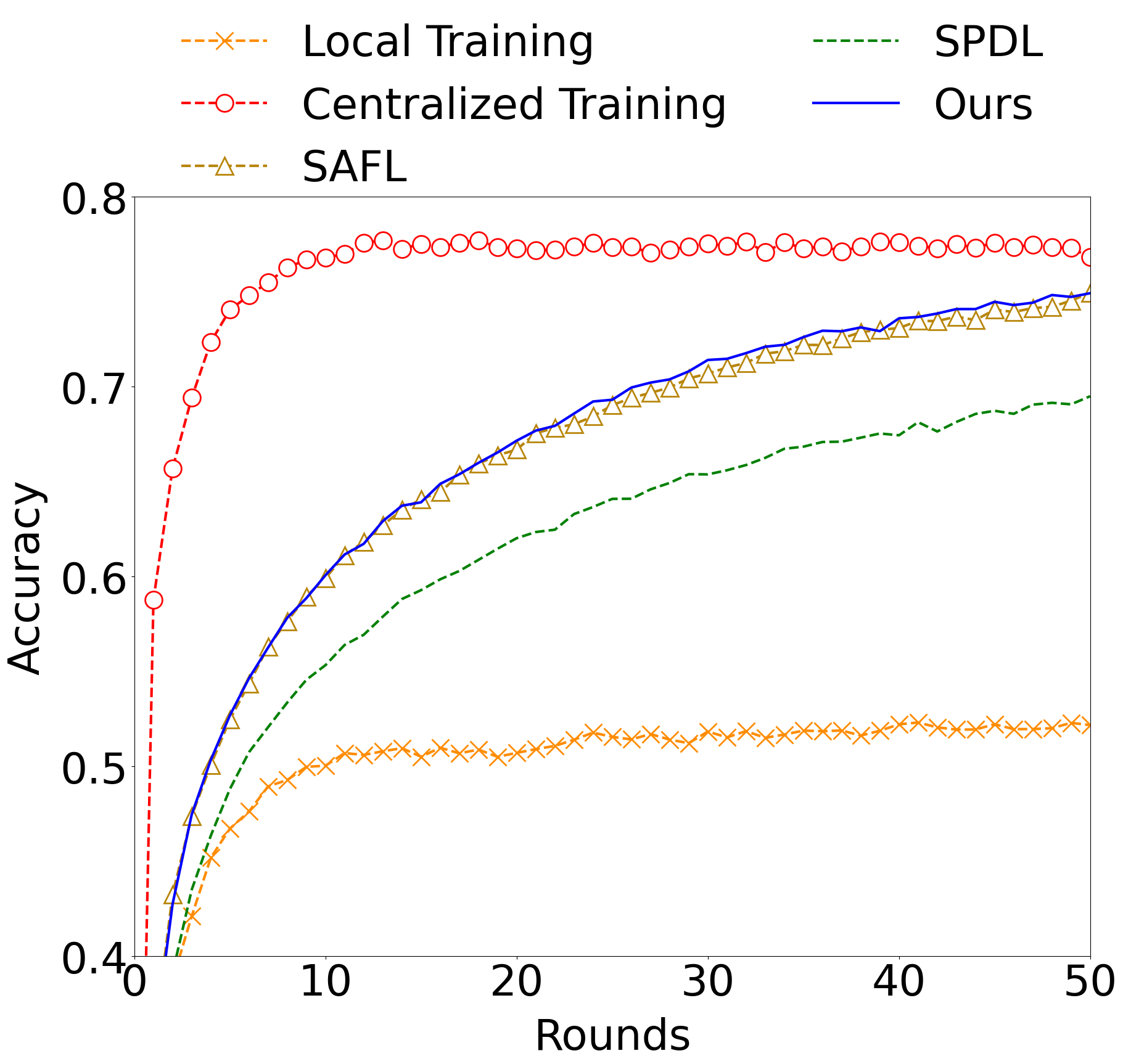}
    }
    \caption{Accuracy of different numbers of clients for \emph{CIFAR-10}.}
    \label{fig:CIFAR10_RoundAccu}
\end{figure*}

\begin{figure*}[ht]
    \centering
    \subfigure[][5 clients.]{
        \label{fig:4a}
        \includegraphics[width=0.6\columnwidth]{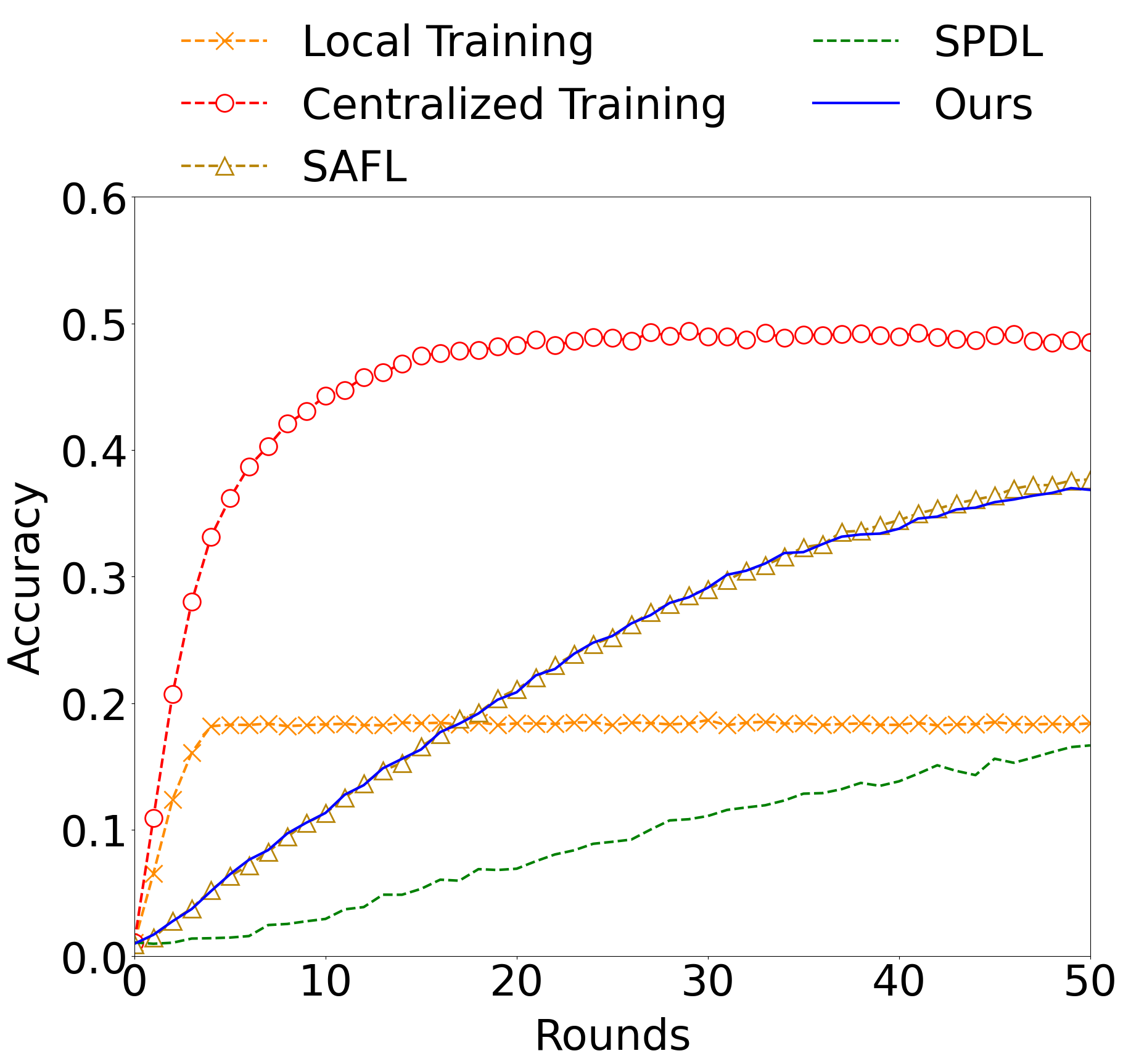}
    }
    \subfigure[][10 clients.]{
        \label{fig:4b}
        \includegraphics[width=0.6\columnwidth]{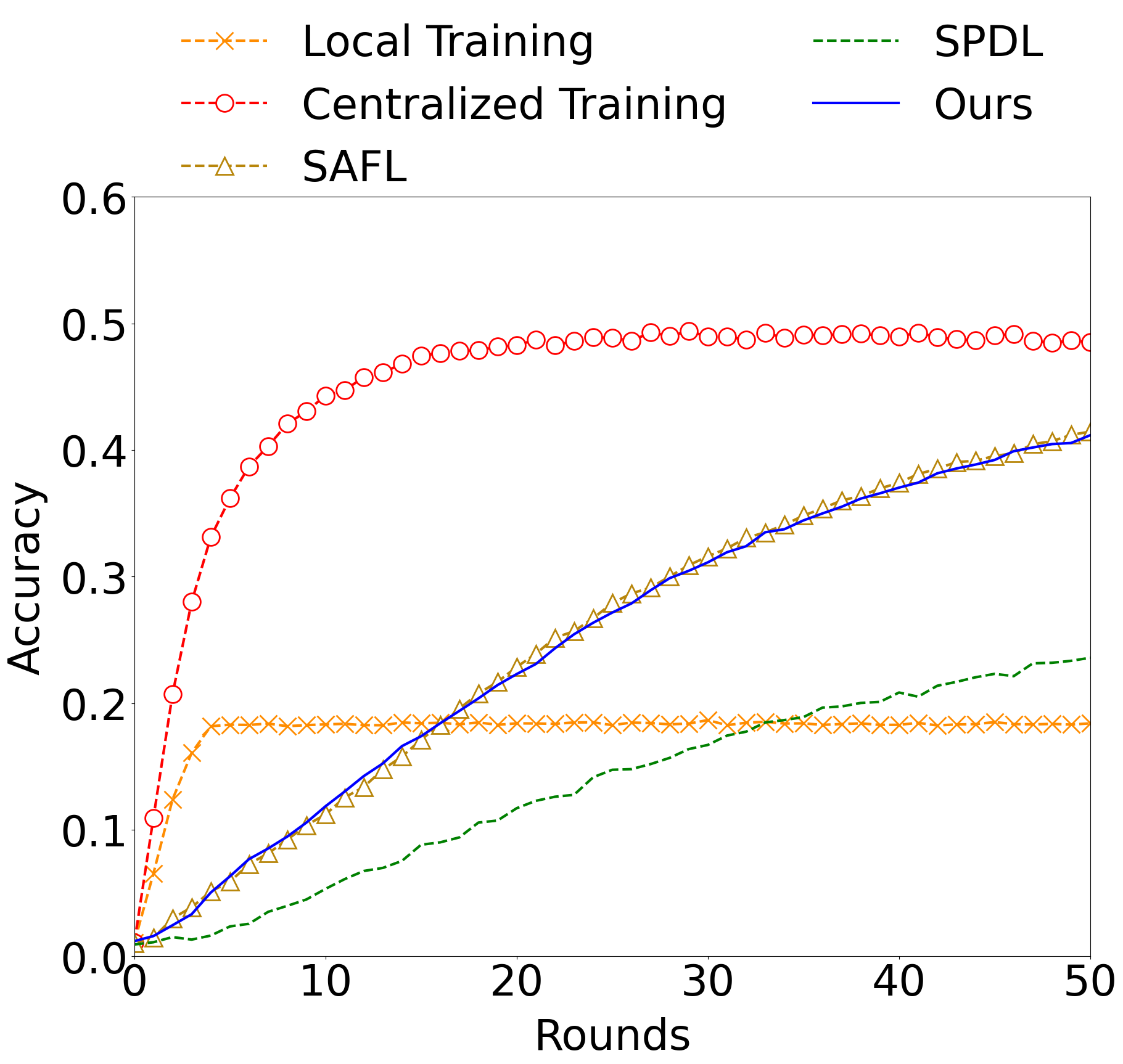}
    }
    \subfigure[][15 clients.]{
        \label{fig:4c}
        \includegraphics[width=0.6\columnwidth]{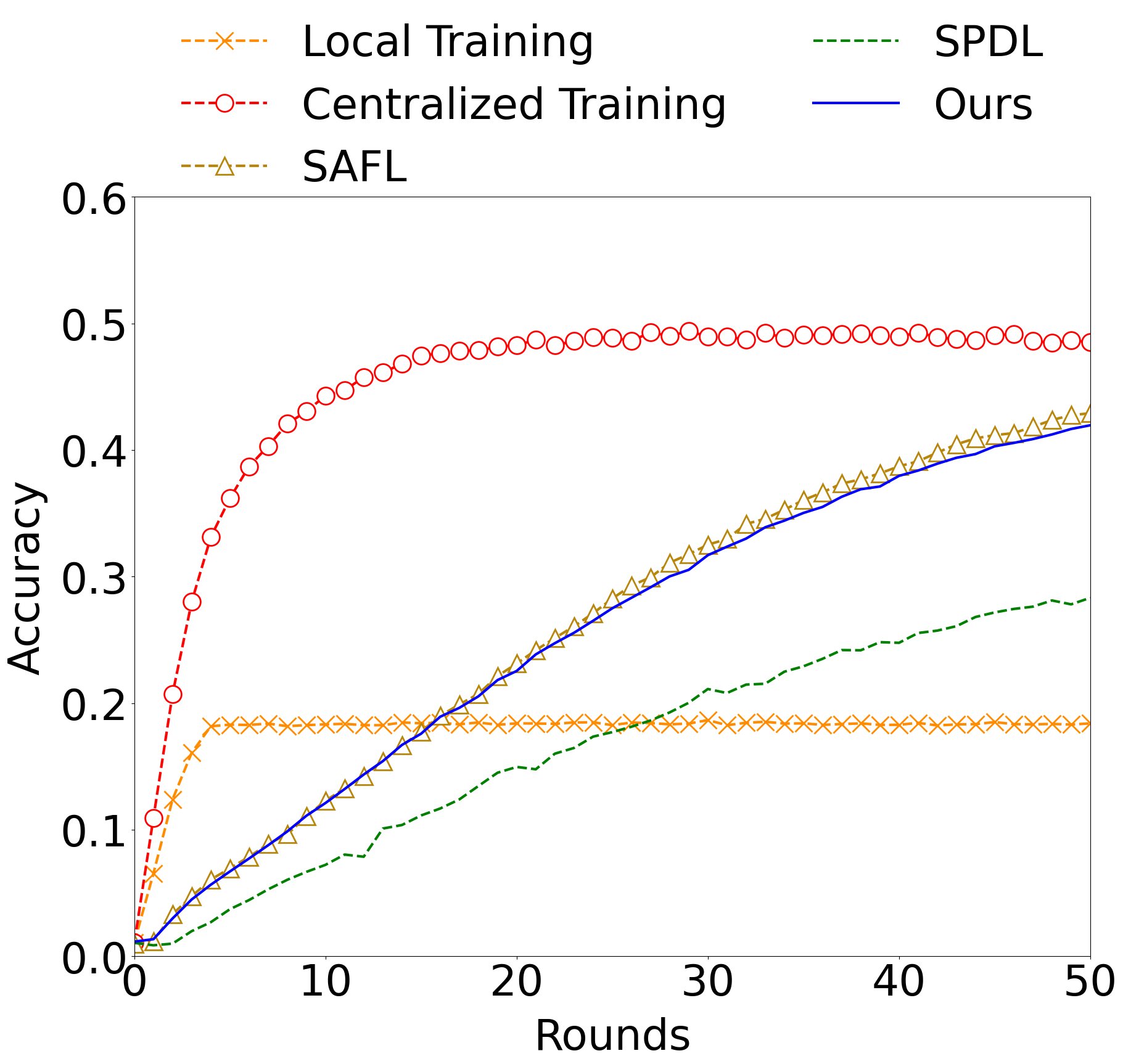}
    }
    \caption{Accuracy of different numbers of clients for \emph{CIFAR-100}.}
    \label{fig:CIFAR100_RoundAccu}
\end{figure*}

For CIFAR-10, the testing results are shown in Fig. \ref{fig:CIFAR10_RoundAccu}. The overall trend of our proposed \texttt{AerisAI} is similar to  the results in MNIST, even if the model is a convolution neural network, i.e., lightweight SqueezeNet. That is, our proposed \texttt{AerisAI} outperforms \texttt{Local Training} and \texttt{SPDL} while achieving comparative performance with \texttt{Centralized Training} with larger rounds. This indicates that \texttt{AerisAI} is general and suitable for a wide spectrum of model architectures. Notice that the Paillier encryption does not degrade the accuracy in our proposed \texttt{AerisAI}, and the performance is almost the same as the scheme without encryption (i.e., \texttt{SAFL}). This also indicates that the smart contract correctly updates the global model. Similarly, \texttt{AerisAI} outperforms \texttt{SPDL} for CIFAR-100, as shown in Fig.\ref{fig:CIFAR100_RoundAccu}. In the scenario where 5 clients collaborate to train the global model, the performance of the global model in \texttt{SPDL} is even worse than that of \texttt{Local Training}. However, the proposed \texttt{AerisAI} performs as well as \texttt{SAFL} while simultaneously ensuring privacy preservation.

\begin{figure*}[ht]
    \centering
    \subfigure[][\emph{MNIST}.]{
        \label{fig:5a}
        \includegraphics[width=0.6\columnwidth]{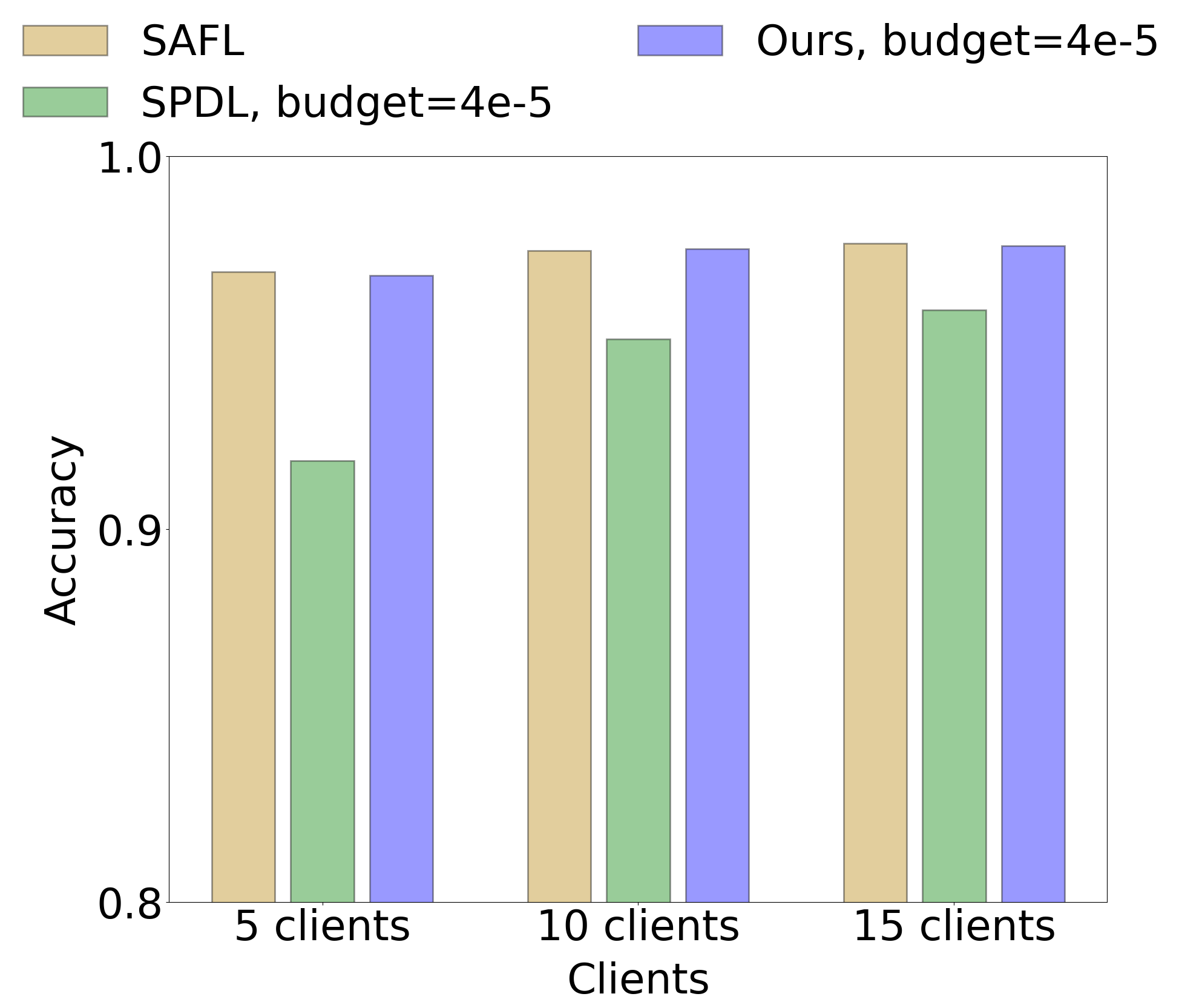}
    }
    \subfigure[][\emph{CIFAR-10}.]{
        \label{fig:5b}
        \includegraphics[width=0.6\columnwidth]{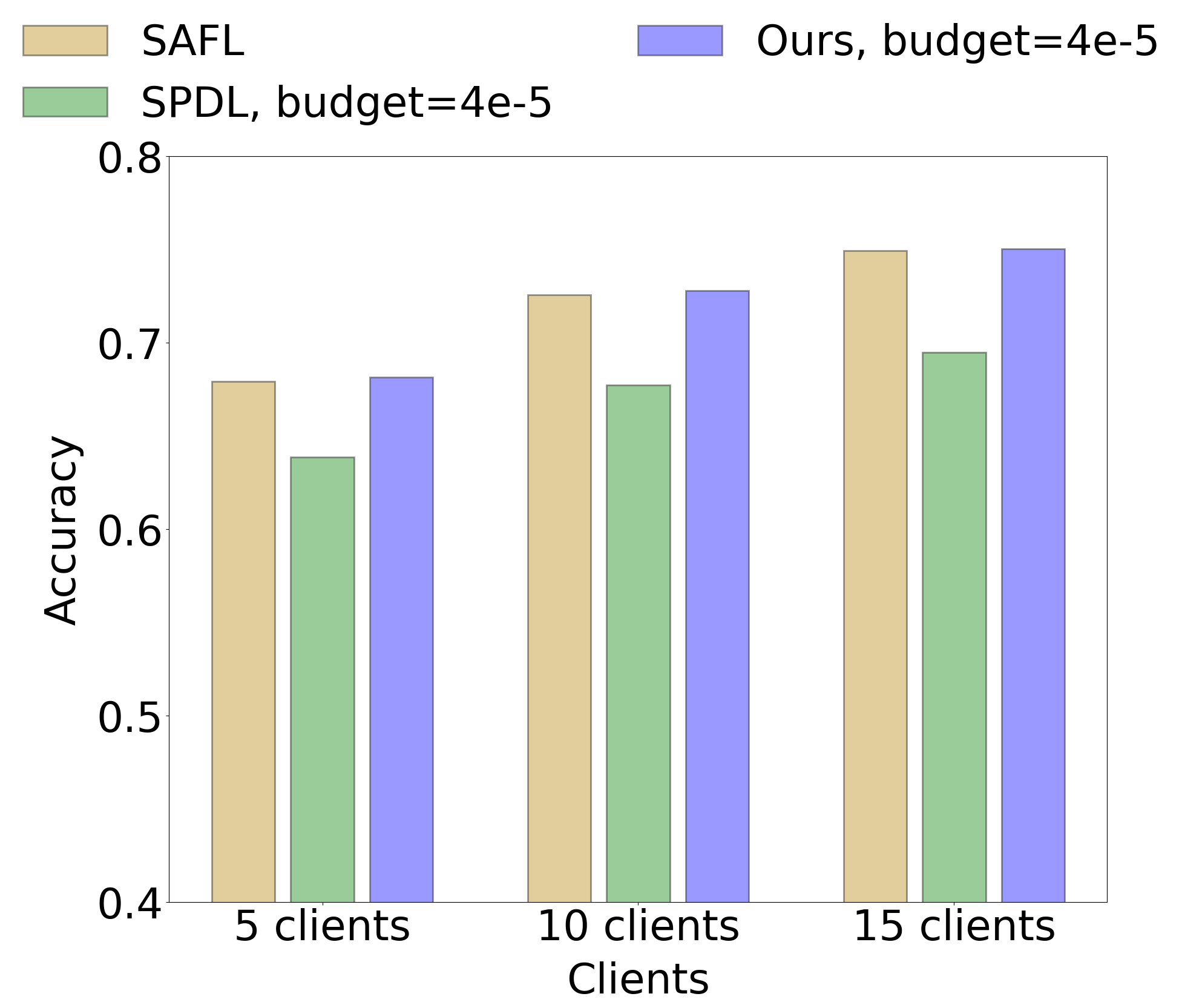}
    }
    \subfigure[][\emph{CIFAR-100}.]{
        \label{fig:5c}
        \includegraphics[width=0.6\columnwidth]{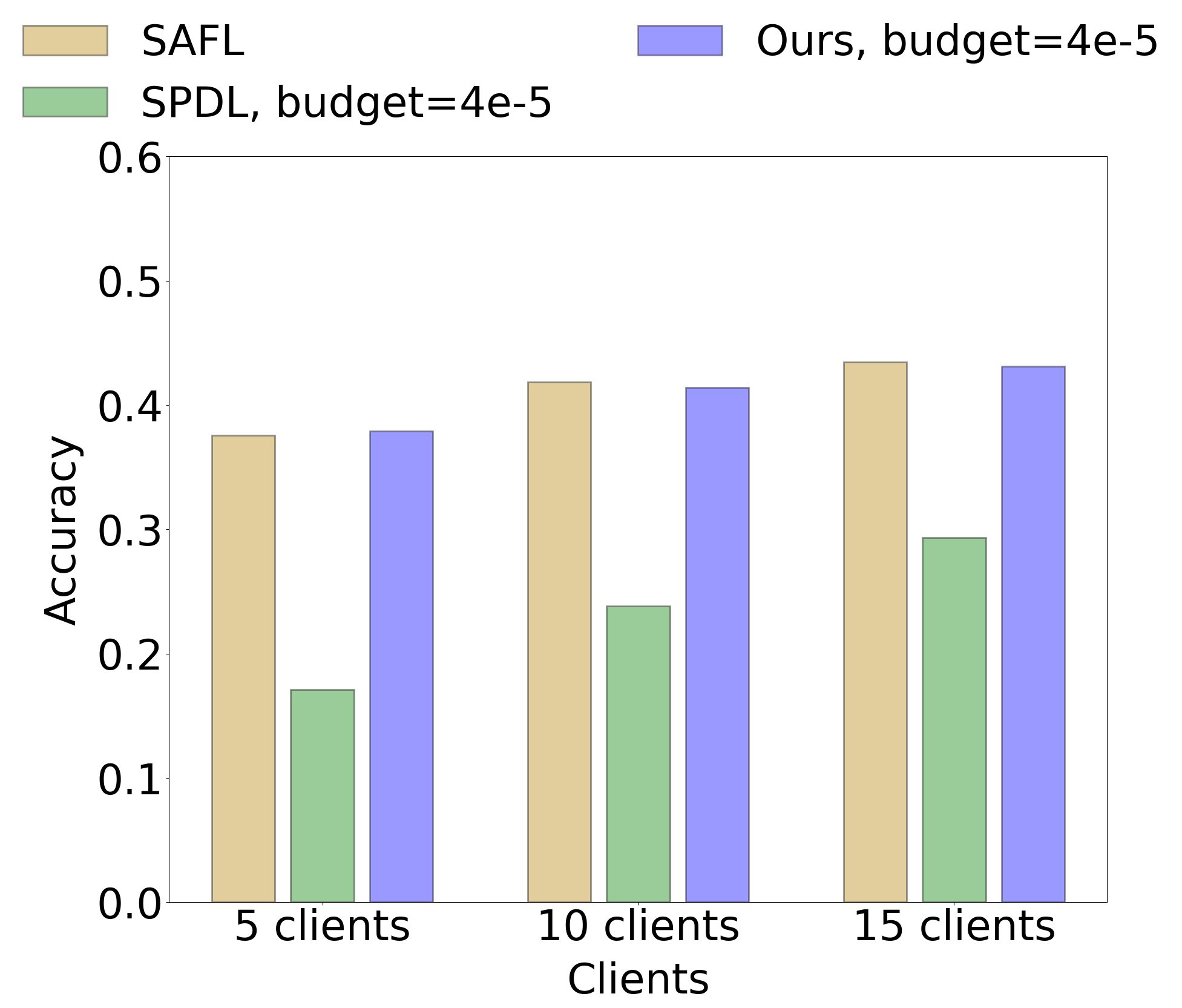}
    }
    \caption{Accuracy of different methods for \emph{MNIST}, \emph{CIFAR-10}, and \emph{CIFAR-100}.}
    \label{fig:threedataset_ClientsAccu}
\end{figure*}

To better understand the baselines that are closely related to our proposed \texttt{AerisAI}, we further present the model performance of \texttt{AerisAI} (i.e., \texttt{Ours}), \texttt{SAFL}, and \texttt{SPDL}. Fig. \ref{fig:threedataset_ClientsAccu} presents the results. The performance of \texttt{Ours} is very close to that of \texttt{SAFL} in both datasets. However, \texttt{SAFL} neither protects the gradients from leakage nor provides differential privacy. 
The accuracy of \texttt{SPDL} is far lower than that of \texttt{AerisAI} (\texttt{Ours}) across all three datasets, particularly for CIFAR-100, which includes a larger number of data classes. In summary, our proposed \texttt{AerisAI} achieves a good balance of model performance and data privacy, outperforming the other baseline approaches significantly. 

\begin{figure*}[ht]
    \centering
    \subfigure[][\emph{MNIST}.]{
        \label{fig:9a}
        \includegraphics[width=0.6\columnwidth]{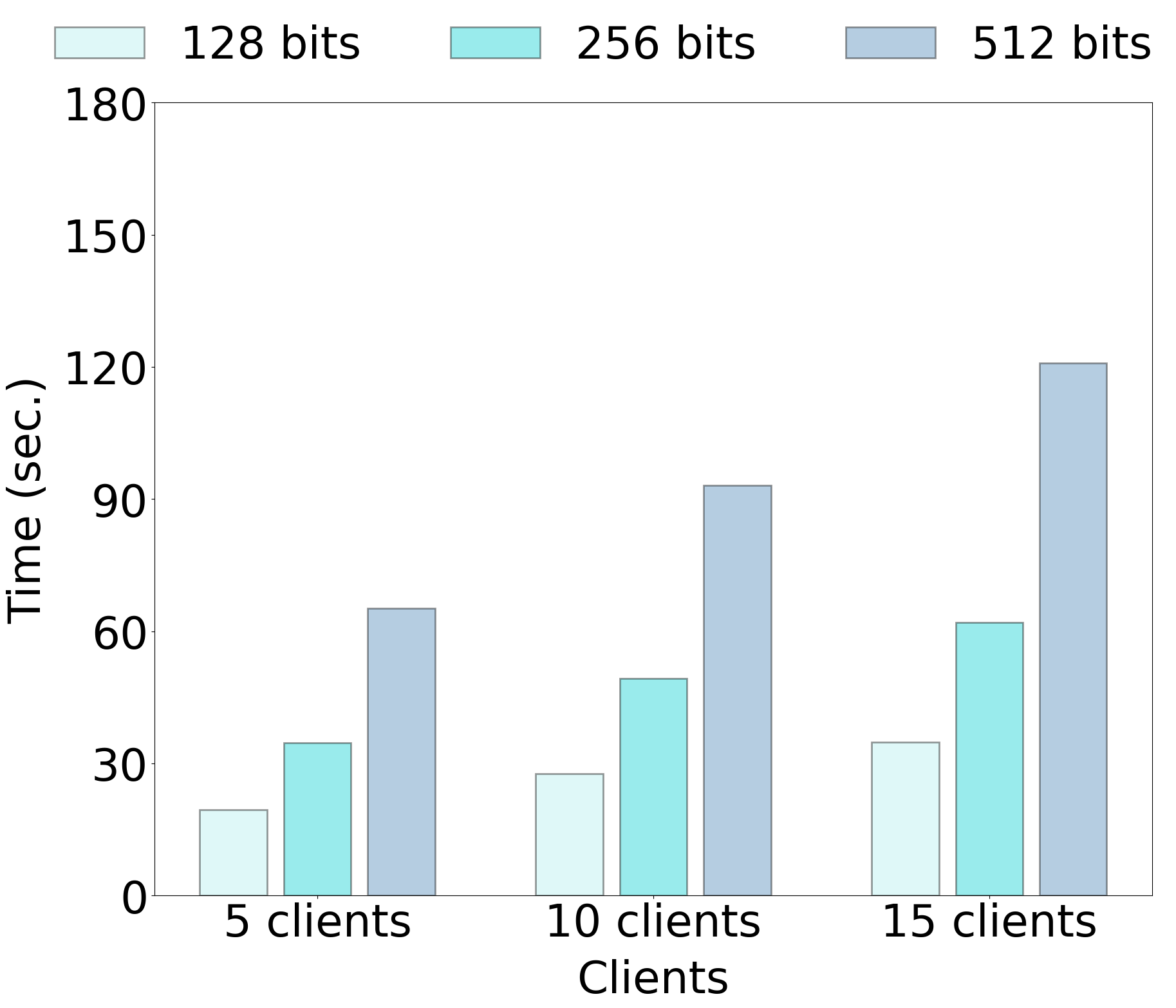}
    }
    \subfigure[][\emph{CIFAR-10}.]{
        \label{fig:9b}
        \includegraphics[width=0.6\columnwidth]{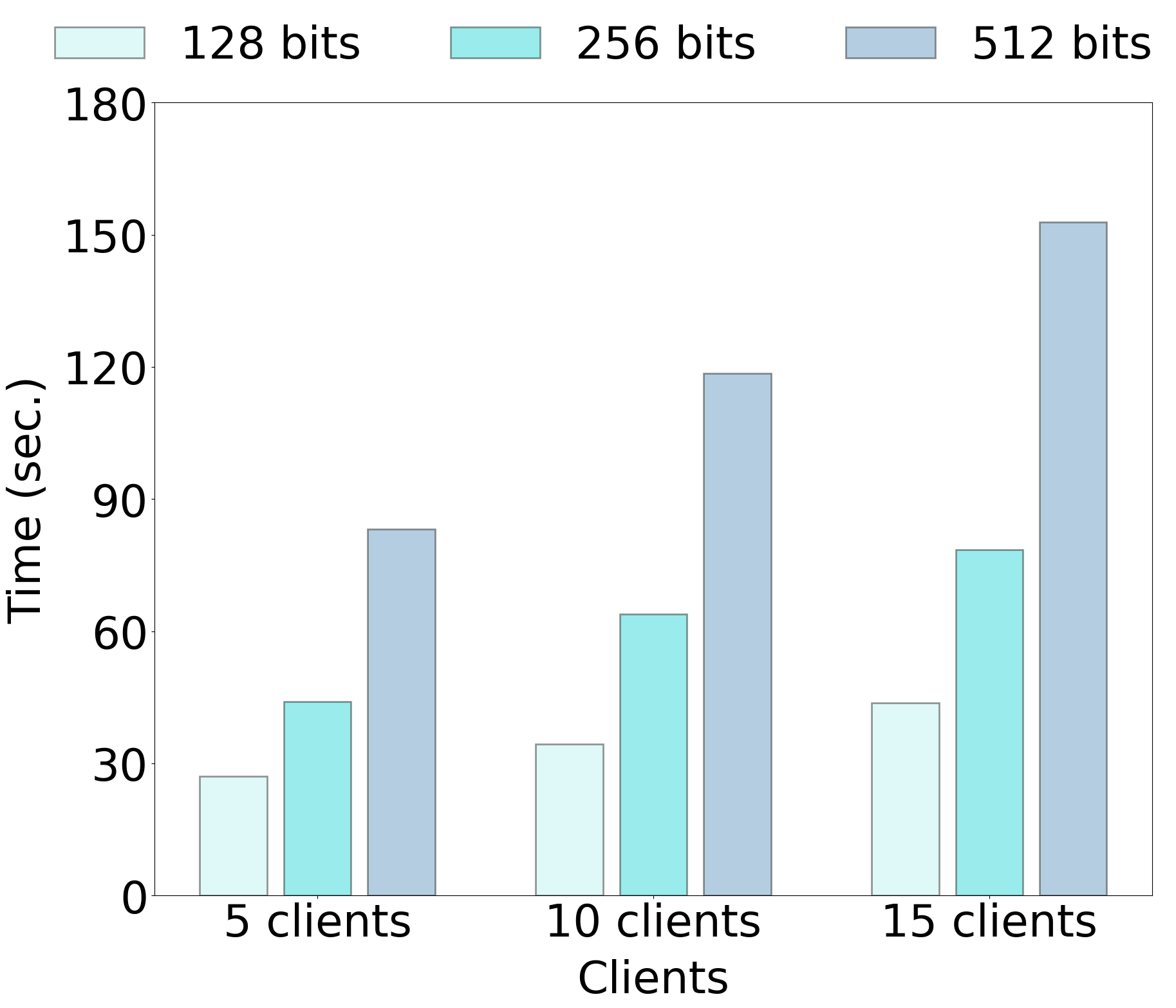}
    }
    \subfigure[][\emph{CIFAR-100}.]{
        \label{fig:9c}
        \includegraphics[width=0.6\columnwidth]{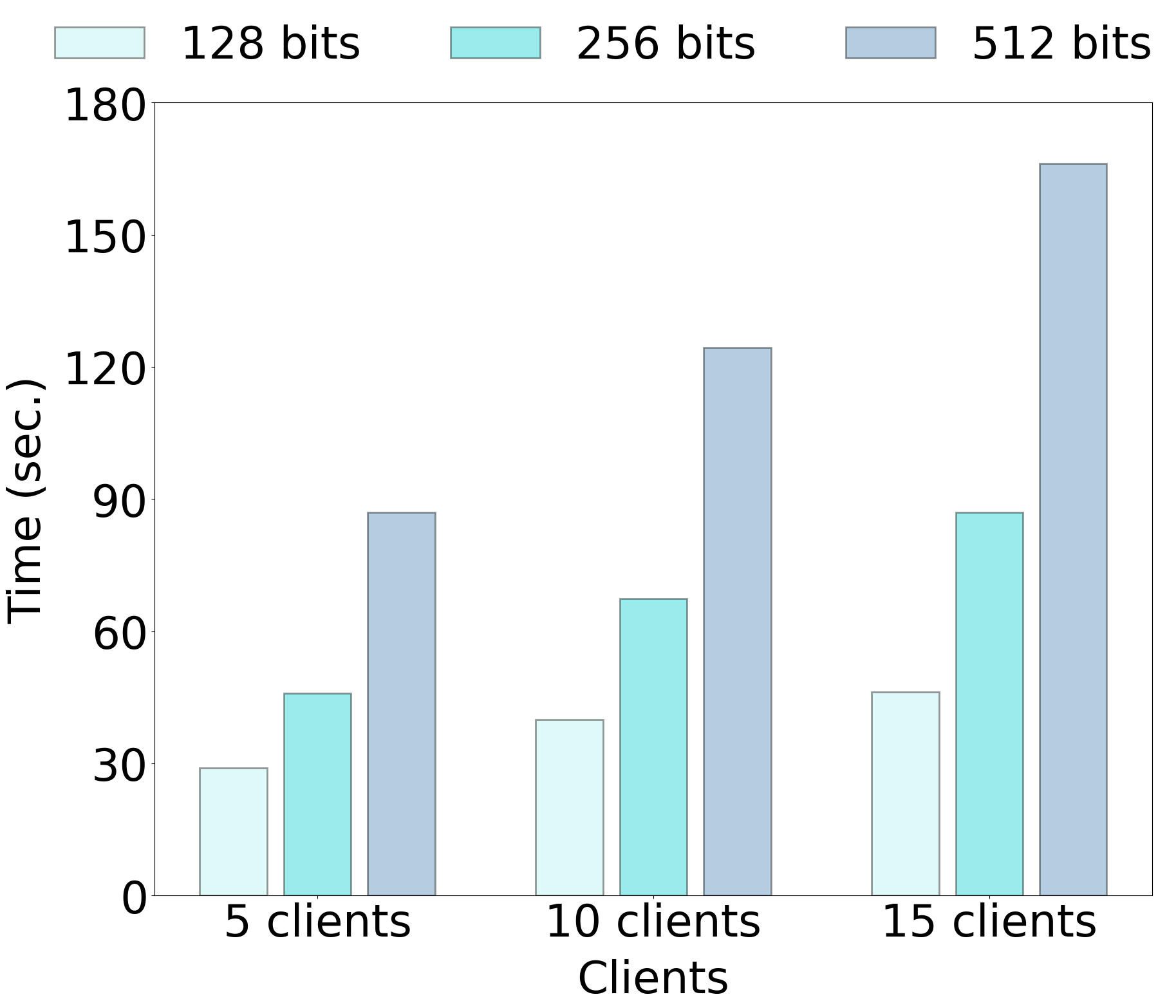}
    }
    \caption{Execution time for \emph{MNIST}, \emph{CIFAR-10}, and \emph{CIFAR-100}.}
    \label{fig:threedataset_BitsTime}
\end{figure*}

\subsubsection{Efficiency}\label{c:efficiency}
We evaluate the time cost to train a global model for one round in the blockchain under different public key lengths of Paillier encryption and different numbers of clients in our proposed \texttt{AerisAI}. We measure the total time required for a smart contract to execute within one round. There are four stages for the smart contract: i) \emph{Upload.} Each client invokes a transaction to upload the encrypted perturbed gradients and encrypted noise, which are aggregated by the smart contract. ii) \emph{Update.} The global encrypted model and noise are updated with the aggregated gradients and noise, respectively. iii) \emph{Model download.} The client invokes a transaction to query the encrypted global model. iv) \emph{Noise download.} The client invokes a transaction to query the encrypted global noise. The oracle decrypts the encrypted global noise first. Then, it encrypts the global noise and session key with AES and the corresponding attributes, respectively. The details are described in Sec. \ref{subc: workflow}, i.e., from S3 to S6.

Figs. \ref{fig:threedataset_BitsTime} show the result for MNIST, CIFAR-10, and CIFAR-100. The time cost increases as the length of the public key increases. This is because the time cost of aggregation and global model update depends on the length of the ciphertext. Although the execution grows as the length increases, it enhances the security of the system. Also, more clients participating in the system increase the time. However, thanks to CP-ABE, the oracle broadcasts the encrypted noise to multiple clients, and thus the time cost does not grow as much as the number of clients increases. The training time is still quite reasonable for our proposed \texttt{AriesAI}. 

\begin{figure}[ht]
    \subfigure[][5 clients]{
        \label{fig:6a}
        \includegraphics[width=0.45\columnwidth]{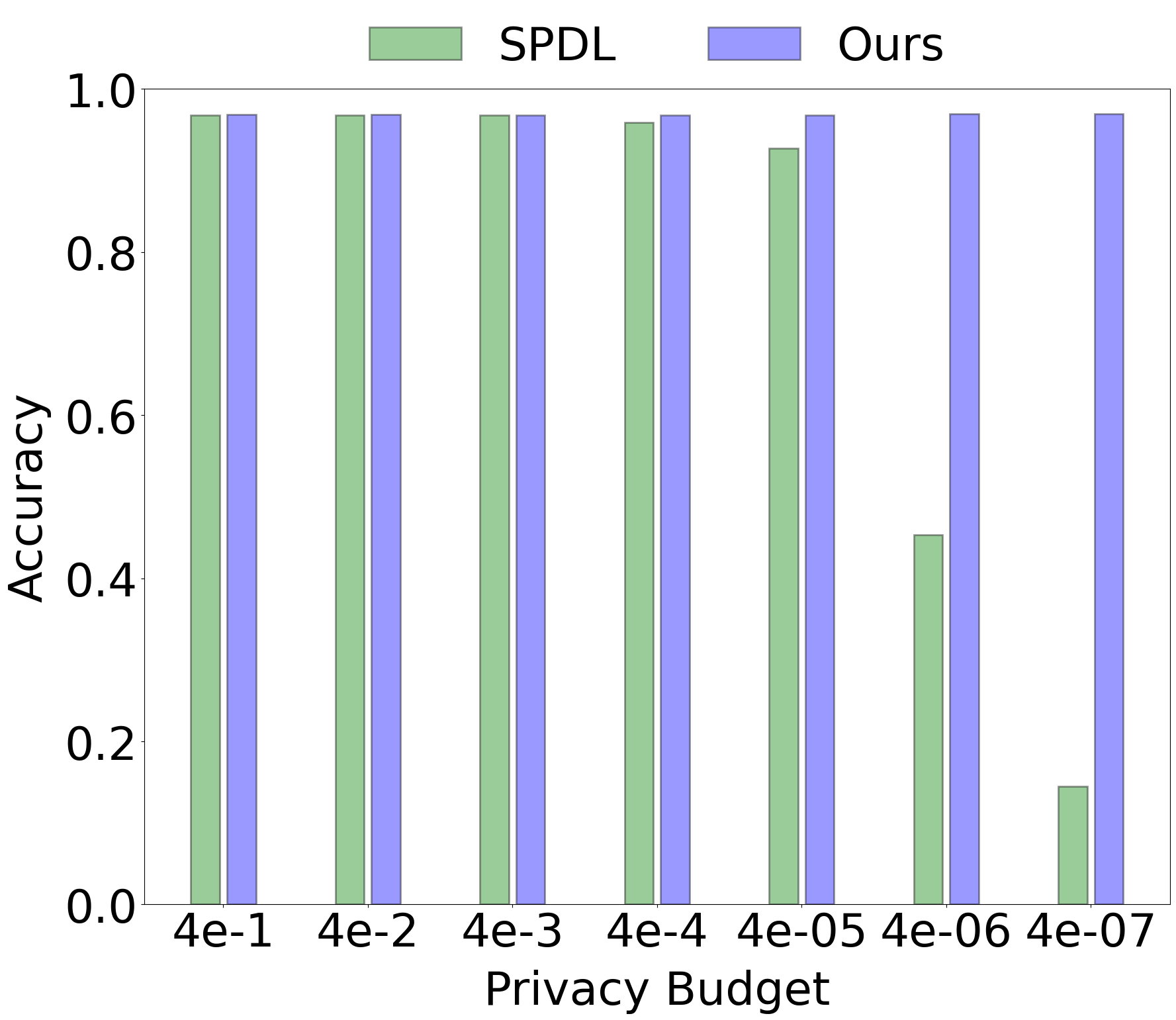}
    }
    \subfigure[][15 clients]{
        \label{fig:6c}
        \includegraphics[width=0.45\columnwidth]{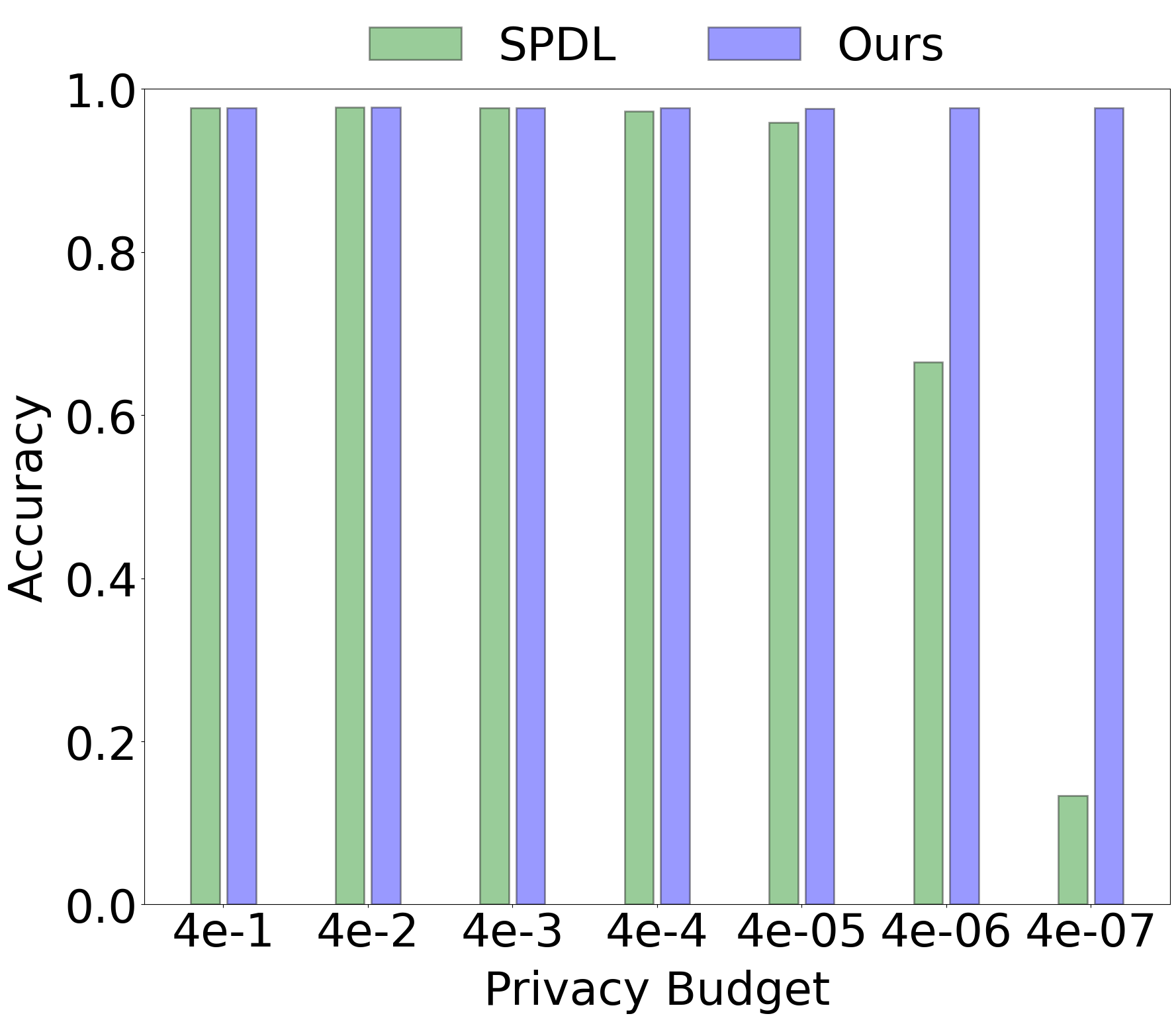}
    }
    \caption{Accuracy with different privacy budgets for \emph{MNIST} dataset.}
    \label{fig:MNIST_BudgetAccu}
\end{figure}

\begin{figure}[ht]
    \subfigure[][5 clients]{
        \label{fig:7a}
        \includegraphics[width=0.45\columnwidth]{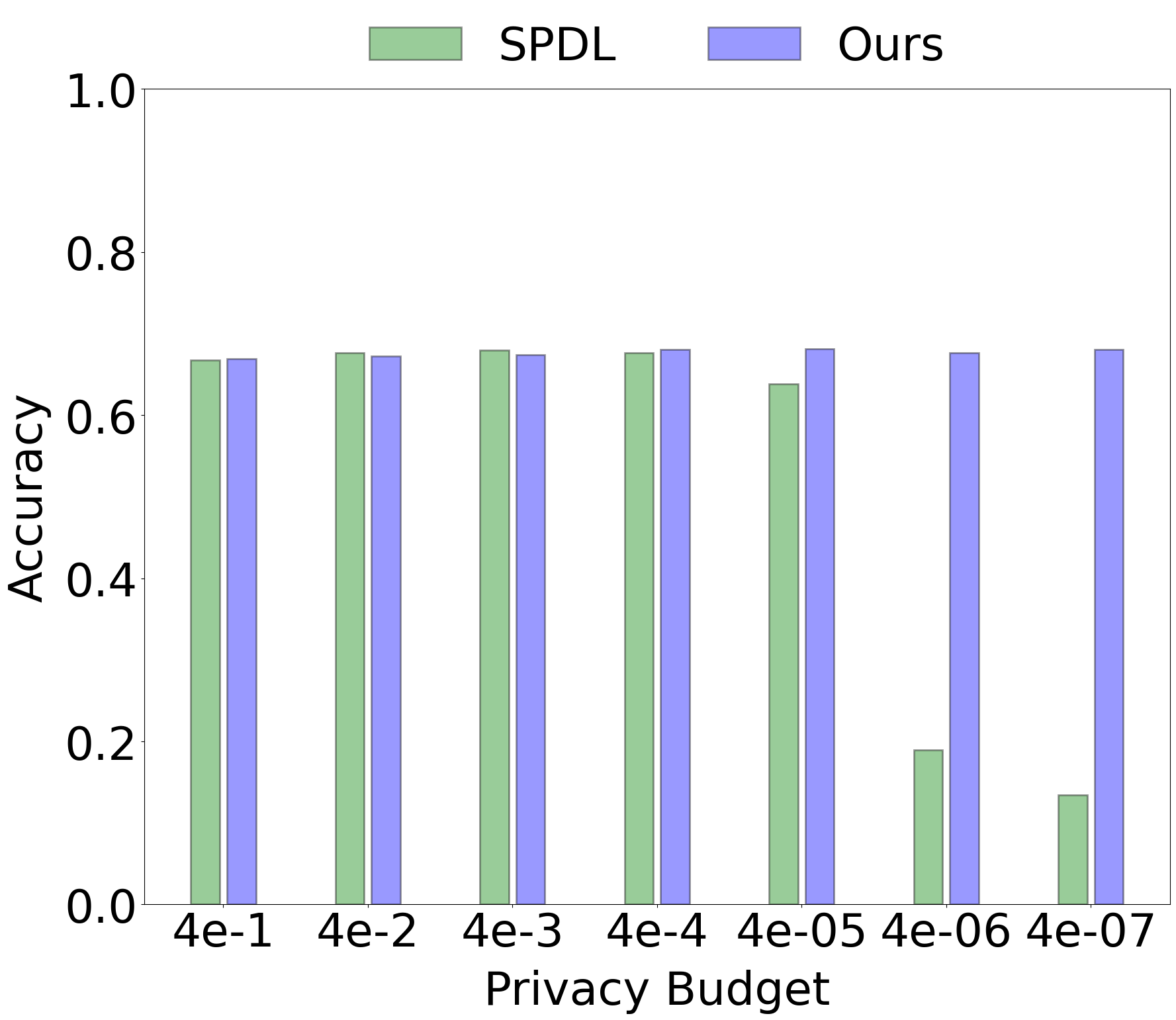}
    }
    \subfigure[][15 clients]{
        \label{fig:7c}
        \includegraphics[width=0.45\columnwidth]{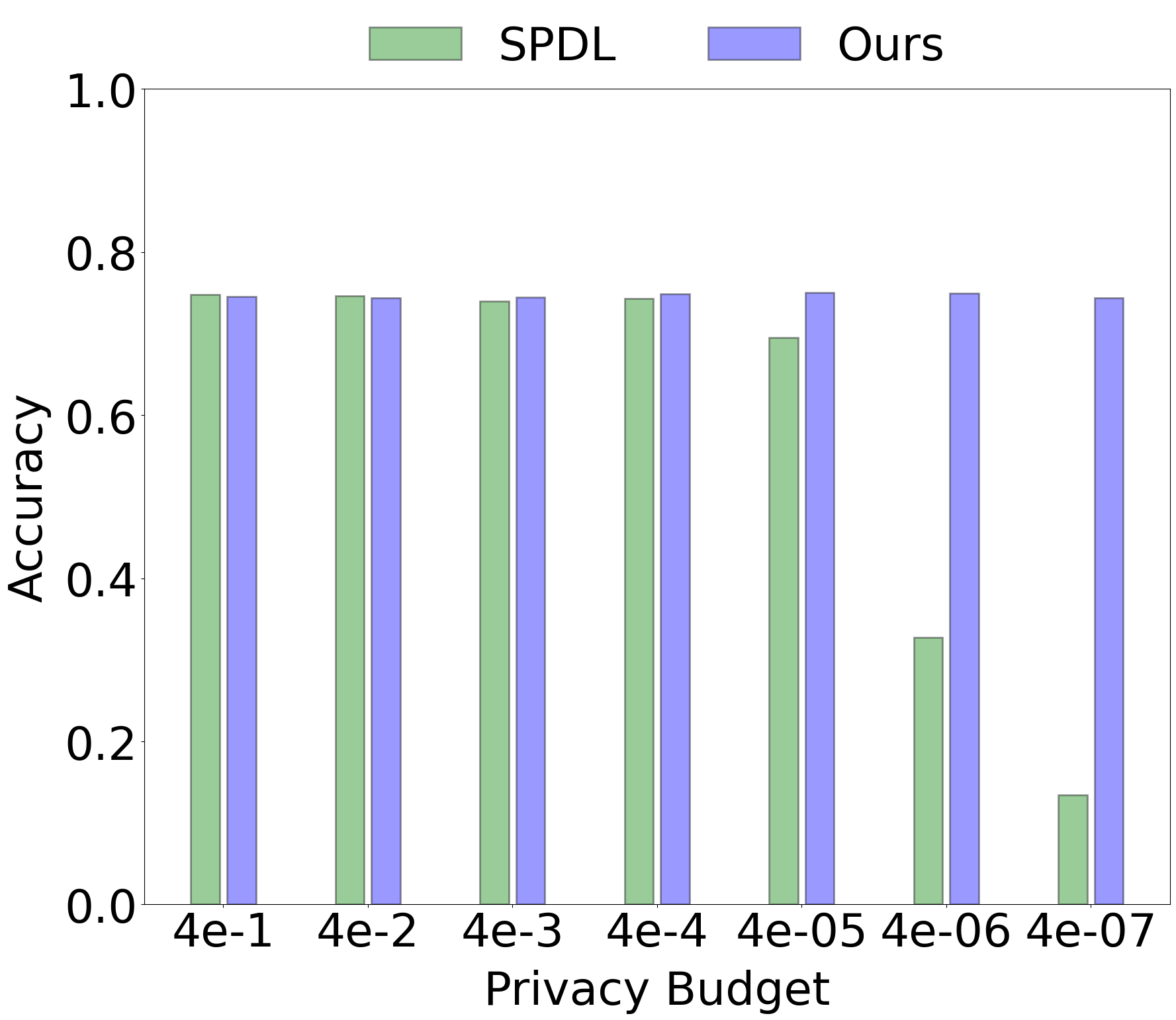}
    }
    \caption{Accuracy with different privacy budgets for \emph{CIFAR-10} dataset.}
    \label{fig:CIFAR10_BudgetAccu}
\end{figure}

\begin{figure}[!ht]
    \subfigure[][5 clients]{
        \label{fig:8a}
        \includegraphics[width=0.45\columnwidth]{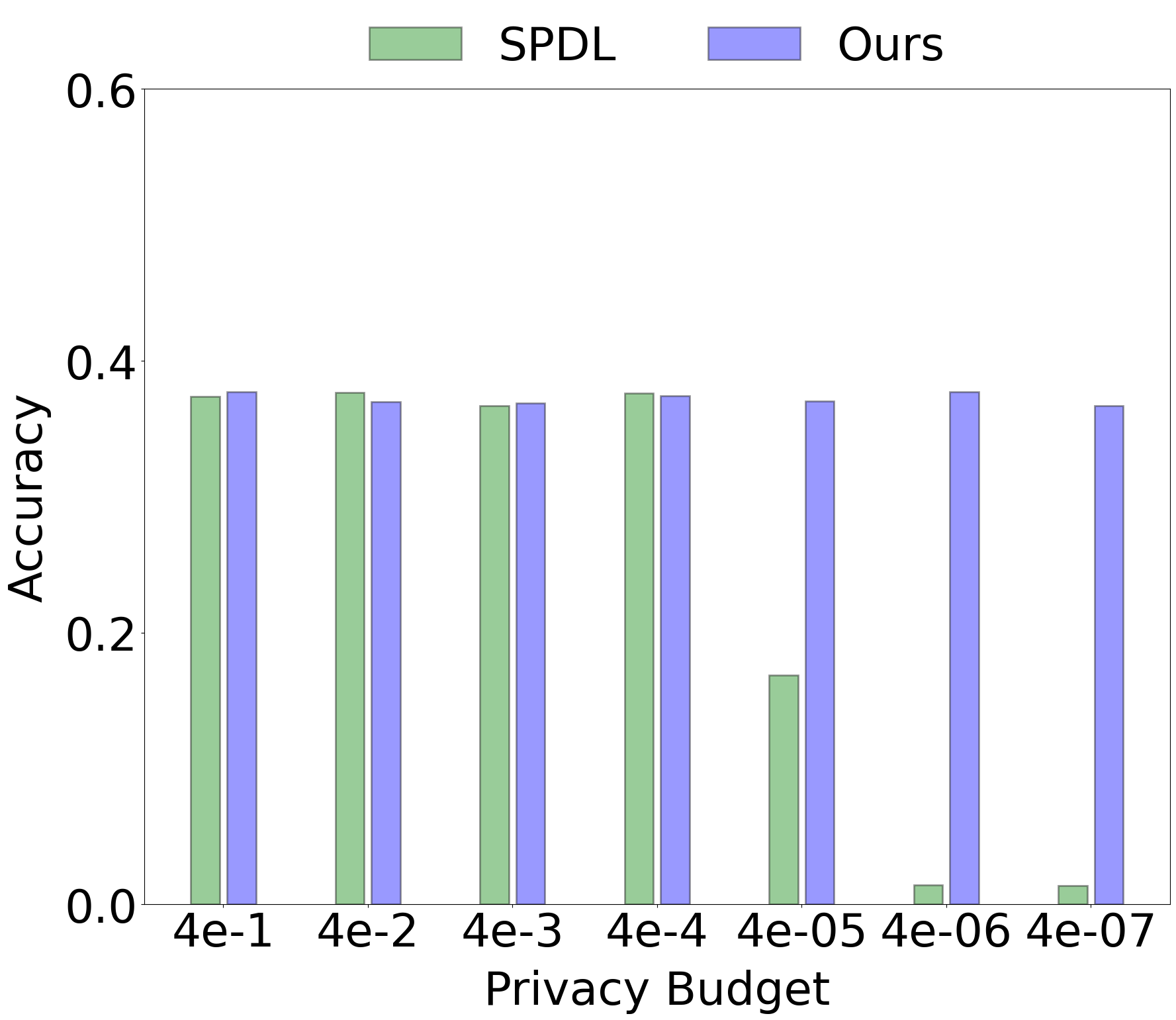}
    }
    \subfigure[][15 clients]{
        \label{fig:8c}
        \includegraphics[width=0.45\columnwidth]{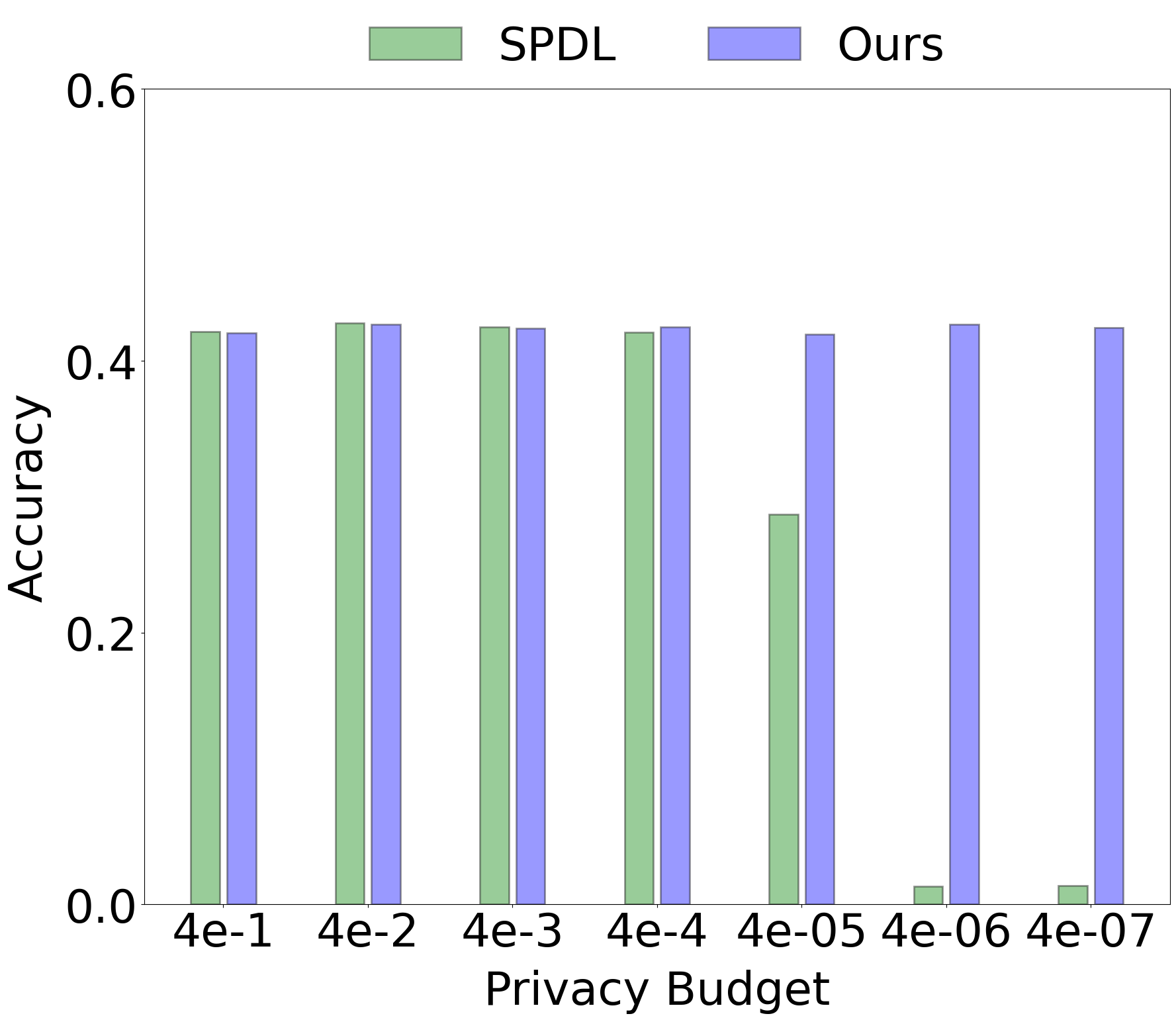}
    }
    \caption{Accuracy with different privacy budgets for \emph{CIFAR-100} dataset.}
    \label{fig:CIFAR100_BudgetAccu}
\end{figure}

\subsubsection{Impact of Privacy Budgets}
We further analyze the impact of different privacy budgets on the global model. For gradient perturbation, the noise is randomly chosen from the Gaussian distribution and injected into gradients. The privacy budget controls the standard deviation of the Gaussian distribution. A smaller privacy budget corresponds to a larger standard deviation, leading to a more dispersed distribution. In other words, with a smaller privacy budget, the noise is more likely to be larger values, which perturbs the gradients more significantly. Therefore, a smaller privacy budget represents a stronger privacy preservation capability but may degrade the performance of the global model more. 

In this set of experiments, we evaluate the global model after a 50-round training with different privacy budgets for MNIST, CIFAR-10, and CIFAR-100 datasets. The results are shown in Figs. \ref{fig:MNIST_BudgetAccu} (\emph{MNIST}), \ref{fig:CIFAR10_BudgetAccu} (\emph{CIFAR-10}), and \ref{fig:CIFAR100_BudgetAccu} (\emph{CIFAR-100}). We set the privacy budgets as $\{0.4, 0.04, ..., 0.0000004\}$, according to \cite{xu2022spdl} 
As shown in Figs. \ref{fig:MNIST_BudgetAccu}, \ref{fig:CIFAR10_BudgetAccu}, and \ref{fig:CIFAR100_BudgetAccu}, our proposed \texttt{AerisAI} (\texttt{ours}) achieves very good accuracy for various numbers of clients and different privacy budgets among the three datasets. In contrast, the performance of \texttt{SPDL} becomes much worse when the privacy budget is smaller. This indicates that our proposed \texttt{AerisAI} is able to achieve very high accuracy while preserving the privacy, outperforming the baseline \texttt{SPDL}.

\section{Conclusion}
\label{c:Conclusions}
In this paper, we propose AerisAI that builds a decentralized collaborative AI platform to enhance privacy without degrading the model performance. We provide formal proof of the security and functionality comparisons with the other state-of-the-art FL approaches. We also conduct extensive experiments on a consortium blockchain network. The results indicates that our proposed approach outperforms the other baselines significantly. 

\end{CJK}

\bibliographystyle{IEEEtran}
\bibliography{ref}

\end{document}